\documentclass[a4paper,cleveref,thm-restate]{lipics-v2021}

\usepackage{style}
\usepackage{shortcuts}

\nolinenumbers

\hideLIPIcs

\title{Robustifying Sparse Matrix Multiplication}
\author{Karl Bringmann}{ETH Zurich}{karl.bringmann@inf.ethz.ch}{}{}
\author{Nick Fischer}{Max Planck Institute for Informatics}{nfischer@mpi-inf.mpg.de}{}{}
\author{Vasileios Nakos}{National and Kapodistrian University of Athens}{vasilisnak@di.uoa.gr}{}{}
\authorrunning{K. Bringmann, N. Fischer and V. Nakos}

\begin{document}

\maketitle

\begin{abstract}
\noindent
In the seminal sparse matrix multiplication problem the goal is to compute the product of two~\makebox{$n \times n$} matrices when the matrices are sparse, i.e., when the number of nonzeros in the input matrices~$\IN$ and/or the number of nonzeros in the output matrix $\OUT$ are much smaller than~$n^2$. In this paper, we explore the generalized problem of (approximately) computing the~$k$ \emph{largest} output entries, with an approximation error dependent solely on the smaller entries---from the viewpoint of sparse recovery, this can be seen as a \emph{robust} variant of sparse matrix multiplication. Despite the substantial research dedicated to sparse matrix multiplication, almost no existing algorithms are robust in this sense. The one exception is Pagh's algorithm in time~\smash{$\widetilde\Order(\IN + nk)$} [ITCS~'12], and it remained open whether other algorithms can be similarly made robust.

Our principal contribution is a \emph{black-box reduction} from robust sparse matrix multiplication to conventional sparse matrix multiplication with only polylogarithmic overhead. Specifically, we show that any sparse matrix multiplication algorithm with running time $T(n, \IN, \OUT)$ can be transformed into a robust algorithm running in time~\smash{$\widetilde\Order(T(n, \IN, k))$}. This reduction leverages an extensive toolkit from sparse recovery, and intriguingly, also involves solving a knapsack-type problem.

By plugging in the state-of-the-art algorithm for sparse matrix multiplication by Abboud, Bringmann, Fischer, and Künnemann [SODA'24], we achieve significantly improved bounds such as $\Order((\IN + k)^{1.346})$. Notably, in the regime where~\smash{$k \geq \IN^{1.762}$}, our reduction culminates in an \emph{almost-optimal} $k^{1+\order(1)}$-time algorithm.
\end{abstract}


\section{Introduction} \label{sec:intro}
Few problems have sparked as much theoretical and practical research in computer science as the quest for efficient \emph{matrix multiplication} algorithms. Motivated by countless applications, several communities have contributed to the development of Strassen-like algebraic algorithms (see~\cite{AlmanDWXXZ25} for the current record $\omega < 2.3714$ on the matrix multiplication exponent), lower bounds~\cite{Blaser99,Raz03,Shpilka03,Landsberg14}, and specialized practically fast algorithms~\cite{BallardDHLS12a,GaoJCHWLW23}. An especial focus lies on \emph{sparse} matrix multiplication, where the goal is to achieve faster algorithms when either the \emph{input-sparsity $\IN$} (the number of nonzero entries in $A$ and $B$) or the \emph{output-sparsity $\OUT$} (the number of nonzero entries in~$AB$) is much smaller than $n^2$. This problem has been studied extensively~\cite{Gustavson78,YusterZ05,AmossenP09,Lingas11,Pagh13,Kutzkov13,WilliamsY14,JacobS15,GuchtWWZ15,GasieniecLLPT17,Kunnemann18,Roche18a,DeepHK20,AbboudBFK24,BennettGGW24,BennettGGW25}, and many faster algorithms in terms of the three parameters~$n, \IN, \OUT$ have been proposed.

In this paper we consider the generalization of sparse matrix multiplication where the output matrix $A B$ is not necessarily sparse, but consists of $k \ll n^2$ \emph{large, significant} entries plus up to~$n^2$ small, insignificant entries which we regard as \emph{noise}---our goal is to recover the $k$ significant entries (with some small approximation error depending on the noise). This problem is arguably very natural and well-motivated for various applications, for instance in a setting where the output matrix is sparse up to small measurement errors, or whenever the output matrix represents some numerical scores out of which only $k$ scores are expected to be significant; see e.g.~\cite{CohenL99,ErikssonBiqueSSWAI11,VegaGR12}.

A concrete application in the context of information retrieval and pattern recognition is to search a database of documents (e.g., searching for a phrase on wikipedia). This task is commonly modelled as follows; see e.g.~\cite{ErikssonBiqueSSWAI11} for more details.\footnote{This application has first been heuristically stated as an application of approximate matrix multiplication~\cite{CohenL99,ErikssonBiqueSSWAI11,VegaGR12} as discussed in \cref{sec:intro:sec:related-work}; however, upon closer inspection it turns out that this task is modelled more accurately by the robust sparse matrix multiplication problem described here. See also the discussion in~\cite{Pagh13}.} We fix a set of~$t$~\emph{terms} (i.e., important phrases in a search query) and store a database of $n$ documents by a \emph{term-by-document matrix}~\makebox{$A \in \Real^{n \times t}$}. Here each column is associated to a term, each row is associated to a document, and each entry~$A_{i, j}$ stores the frequency with which the term appears in the document. A search query is represented by a length-$t$ vector~$b$ of term frequencies, so that the inner product~$A b$ associates to each document a similarity score (called the \emph{cosine similarity})---the higher the score, the better the respective document matches the query. Therefore, evaluating multiple database queries is to compute the \emph{large} entries in the matrix product $A B$ (where the columns of $B$ are the queries). Another equally important application is to compute all pairs of similar documents, which amounts to computing the large entries of~$A A^T$~\cite{CohenL99}.

\subparagraph*{Robustification via Sparse Recovery}
Viewed through the lens of sparse recovery and compressed sensing, the problem we study is a natural \emph{robustification} of the exact sparse matrix multiplication problem. Similar robustifications are common, e.g.\ for the seminal \emph{sparse recovery} problem where the goal is to recover a $k$-sparse vector $x$ from as few linear measurements as possible. In the \emph{robust} sparse recovery problem the vector $x$ is not necessarily $k$-sparse, but instead consists of~$k$ significant entries plus some noise. Formally, the goal is to find a vector $x'$ such that
\begin{alignat*}{2}
    \norm{x - x'}_2 &\leq (1 + \epsilon) \cdot \norm{x_{-k}}_2 &&\qquad\text{(``$\ell_2/\ell_2$-guarantee'')},
\intertext{or that}
    \norm{x - x'}_\infty &\leq \frac{1}{\sqrt{k}} \cdot \norm{x_{-k}}_2 &&\qquad\text{(``$\ell_\infty/\ell_2$-guarantee'')}.
\end{alignat*}
Here, $x_{-k}$ is the vector obtained from $x$ after zeroing out the largest $k$ coordinates. In this way, as desired, the approximation error only depends on the \emph{insignificant} entries of $x$. The latter $\ell_\infty/\ell_2$-guarantee is strictly stronger,\footnote{It is easy to verify that $\ell_\infty/\ell_2$ is at least as strong as $\ell_2/\ell_2$; see for instance the proof of~\cite[Theorem~2.1]{price20111+}. To see that the guarantee is strictly stronger, consider a vector $x$ with, say, $0.05k$ coordinates equal to~\smash{$\frac{2}{\sqrt{k}}$} and where the total $\ell_2$-mass of the other coordinates is $1$. An algorithm with the $\ell_2/\ell_2$-guarantee (for $\epsilon = 0.1$, say) could as well detect no coordinate (i.e., $x' = 0$ is a feasible solution), whereas any algorithm with $\ell_\infty/\ell_2$-guarantee \emph{must} detect all the $0.05k$ special entries.} and also conceptually more desirable as it prevents missing relevant entries or false outliers. Both the $\ell_\infty/\ell_2$ and the $\ell_2/\ell_2$ are very extensively studied in the field of sparse recovery, including Sparse Fourier transforms~\cite{HassaniehIKP12, Kapralov16, IndykKP14, NakosSW19, KapralovVZ19}, sketching and streaming algorithms~\cite{KaneNPW11, larsen2019heavy}, and reconstruction of signals from partial measurements~\cite{candes2006robust, candes2006stable, GilbertLPS10, price20111+, IndykPW11, NakosS19, FischerN25}, to name a few from a vast list. In the same spirit we formally define the \emph{robust} version of sparse matrix multiplication as follows:

\begin{definition}[Robust Sparse Matrix Multiplication] \label{def:robust}
Given matrices $A, B \in \Real^{n \times n}$ in sparse representation, and given~$k \geq 1$, compute a matrix $C \in \Real^{n \times n}$ such that
\begin{equation*}
    \norm{A B - C}_\infty \leq \frac{1}{\sqrt{k}} \cdot \norm{(A B)_{-k}}_F.
\end{equation*}
Here, $(A B)_{-k}$ denotes the matrix obtained from $A B$ by zeroing out the largest $k$ entries (in absolute value).
\end{definition}

Note that this is indeed a generalization of sparse matrix multiplication (as by setting $k \geq \OUT$, we force the error to be $\norm{(A B)_{-k}}_F = 0$, and thus $C = AB$). The study of this problem was initiated in 2013 by Pagh~\cite{Pagh13} who developed the first robust sparse matrix multiplication algorithm in time $\widetilde\Order(\IN + k n)$.\footnote{Pagh~\cite{Pagh13} referred to the problem as \emph{compressed} matrix multiplication. We instead prefer the name \emph{robust sparse} matrix multiplication to be more in line with the sparse recovery terminology.} In a nutshell, his algorithm relies on various tools from sparse recovery and interestingly encodes the problem as the multiplication of two \emph{polynomials} (implemented by the Fast Fourier Transform). Among the vast collection of known sparse matrix multiplication algorithms~\cite{Gustavson78,YusterZ05,AmossenP09,Lingas11,Pagh13,Kutzkov13,WilliamsY14,JacobS15,GuchtWWZ15,GasieniecLLPT17,Kunnemann18,Roche18a,DeepHK20,AbboudBFK24} Pagh's algorithm remained the only algorithm which is robust in the sense of \cref{def:robust}, and none of the other algorithms can be easily robustified.

This leaves open whether we could benefit from the ideas and improvements in the other sparse matrix multiplication algorithms to also obtain improved bounds for the robust problem. Specifically, many of these algorithms exploit Strassen-like algebraic fast matrix multiplication---can we obtain improved robust algorithms using fast matrix multiplication?

\subsection{Main Result: A Reduction} \label{sec:intro:sec:main-result}
Motivated by this question we investigated which algorithms can be robustified. We arrived, satisfyingly, at a very general solution: Our main contribution is \emph{black-box reduction} from \emph{robust} sparse matrix multiplication to \emph{exact} sparse matrix multiplication with only polylogarithmic overhead: 

\begin{restatable}[Robustifying Reduction]{theorem}{thmrobustifying} \label{thm:robustifying}
Suppose that sparse matrix multiplication of $n \times n$ matrices (with $\IN$ nonzero inputs and $\OUT$ nonzero outputs) is in time $T(n, \IN, \OUT)$. Then robust sparse matrix multiplication is in time $\Order(n \log^4 n + \log^3 n \cdot T(n, \IN, k)) = \widetilde\Order(T(n, \IN, k))$ (by a randomized algorithm that succeeds with high probability).
\end{restatable}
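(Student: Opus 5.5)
The plan is to reduce to the exact problem through hashing-based sparse recovery, in the style of Count-Sketch. The matrix is sketched by compressing the inner dimension-free structure of the output: choose random hash functions $h, g : [n] \to [m]$ and random signs $\sigma, \tau : [n] \to \{\pm 1\}$, and form the compressed matrices $A' \in \Real^{m \times n}$ and $B' \in \Real^{n \times m}$ with $A'_{h(i), \ell} \mathrel{+}= \sigma(i) A_{i,\ell}$ and $B'_{\ell, g(j)} \mathrel{+}= \tau(j) B_{\ell, j}$. Then $(A'B')_{a,b} = \sum_{h(i)=a, g(j)=b} \sigma(i)\tau(j)(AB)_{i,j}$ is a two-dimensional Count-Sketch of $AB$ with $m^2$ buckets. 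Take $m^2 = \Theta(k)$. Each entry $(AB)_{i,j}$ is estimated by $\sigma(i)\tau(j)(A'B')_{h(i),g(j)}$. By the standard analysis, with constant probability this estimate has error at most $O(1/\sqrt{k}) \cdot \norm{(AB)_{-k}}_F$: with constant probability no top-$k$ entry collides with $(i,j)$, and the noise has variance $\norm{(AB)_{-k}}_F^2/m^2$. Taking the median over $O(\log n)$ independent repetitions gives the guarantee for all $n^2$ entries with high probability. The constant factor is absorbed by using $m^2 = \Theta(k)$ with a large enough constant.

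The difficulty is that $A'B'$ is an $m \times m$ product that is dense: it has about $k$ nonzeros, but only if $m \le n$. Computing it via the exact algorithm costs $T(m, \IN, m^2)$, which is fine since $\OUT \le m^2 = O(k)$ and $T$ is presumably monotone. There are two real obstacles.

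\textbf{(i) Reconstruction.} We cannot output all $n^2$ estimates. We need to identify the candidate heavy pairs $(i,j)$, of which there are $O(k)$, and set $C$ to zero elsewhere. Setting an entry to zero is safe whenever its true value is at most $\norm{(AB)_{-k}}_F/\sqrt{k}$. To identify the candidates, I would use the standard hierarchical or bit-testing approach. Recover the row index $i$ and column index $j$ bit by bit by additionally sketching $A$ with weights depending on the bits of $i$; this gives $O(\log n)$ further products. Alternatively, use a dyadic hierarchy over prefixes of $i$ and $j$. This yields the $\log^3 n$ overhead: $\log n$ for repetitions, $\log n$ for bits or levels, and $\log n$ for boosting. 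The $n\log^4 n$ term accounts for the hashing and preprocessing.

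\textbf{(ii) Unbalanced heavy entries.} The main obstacle is that the heavy entries need not be spread evenly. If all $k$ heavy entries lie in a single row, row hashing into $m=\sqrt{k}$ buckets collides them all. The bucket grid should therefore be an $m_1 \times m_2$ rectangle with $m_1 m_2 = \Theta(k)$, and the right aspect ratio is unknown. I would try all $O(\log n)$ aspect ratios, and more generally split the rows and columns into classes by their heavy mass. Choosing bucket budgets per class so that the total output size stays $O(k)$ while every class is isolated well enough is where I expect the knapsack-type problem mentioned in the abstract. I would first handle this via geometric grouping of row norms, estimated from a preliminary coarse sketch, and then allocate buckets greedily by powers of two.

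Correctness then follows from union bounds and median arguments. The running time is $O(\log^3 n)$ calls of $T(n, \IN, O(k))$ plus near-linear preprocessing. Any constant blow-up in $k$ is handled by rescaling the constants.
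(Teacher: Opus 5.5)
There is a genuine gap. It sits in your obstacle (ii), and it is more serious than you describe, because the base step is already wrong.

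With product hashing $(i,j)\mapsto(h(i),g(j))$, an entry in the same row (or column) as $(i,j)$ lands in its bucket with probability $1/m$, not $1/m^2$. Write $X=AB$ and sum only over light entries. Then, even conditioned on no heavy collision, the variance of your estimate of $X_{ij}$ is of order $\frac1m\sum_{j'\neq j}X_{ij'}^2+\frac1m\sum_{i'\neq i}X_{i'j}^2+\frac1{m^2}\|X_{-k}\|_F^2$. So if the noise is concentrated in the row of a heavy entry, the error is about $\|X_{-k}\|_F/k^{1/4}$ rather than $\|X_{-k}\|_F/\sqrt{k}$.

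Trying all aspect ratios $m_1\times m_2$ does not rescue this. A single matrix can contain one heavy entry whose row carries half the noise and another whose column carries the other half. These need $m_2=\Omega(k)$ and $m_1=\Omega(k)$ respectively, and you have no rule for deciding which sketch to trust for which entry.

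Your remedy (classes by heavy mass, norms from a coarse sketch, greedy powers-of-two allocation) leaves the hard part unspecified. A row or column norm does not determine how many buckets that line needs: one huge entry needs budget $1$, while the same mass spread out may need a large budget. The need depends on how the line compares to the global noise level $\|X_{-k}\|_F$, and the number of top-$k$ entries in a line depends on the whole matrix.

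The paper supplies the missing ingredients as follows.

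\textbf{One-sided sketching.} It computes per-column budgets $t_j$ with $\|(Xe_j)_{-t_j}\|_2^2/(t_j+1)\le \|X_{-k}\|_F^2/(k+1)$ and $\sum_j t_j=O(k)$. It groups columns with equal budget $s$ into $J_s$, so $|J_s|=O(k/s)$. It then evaluates $(H^{(s)}A)B^{(s)}$ with a heavy-hitter sketch $H^{(s)}$ of $\widetilde O(s)$ rows. Columns are never hashed, so only row indices need decoding, and each chunk of the product has at most $k$ entries.

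\textbf{Budget allocation.} It applies the $\ell_2$-tail estimators $S^{(t)}$, $t\in T_n$, as $(SA)B$ to get bi-criteria estimates $v_{j,t}$ of $\|(Xe_j)_{-t}\|_2^2$. It obtains $w$ with $\frac1{128}\|X_{-1024k}\|_F^2\le w\le\|X_{-k}\|_F^2$ as the approximate optimum of a min-cost multiple-choice knapsack: minimize $\sum_j v_{j,t_j}$ subject to $\sum_j t_j\le 2k$. It sets $t_j$ to $512$ times the least $t\in T_n$ with $64v_{j,t}/(t+1)\le w/(k+1)$. The total is bounded by comparison with $s_j\approx\max\{k'_j,\,2^{13}(k+1)\|(Xe_j)_{-k'_j}\|_2^2/\|X_{-k'}\|_F^2\}$ for $k'=1024k$.

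Your plan has no substitute for estimating $\|X_{-k}\|_F$. Guessing it instead would require verifying the $\ell_\infty$ error of the output, which is itself nontrivial.
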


The reduction naturally builds on the broad toolkit provided by sparse recovery. Vaguely speaking, our main challenge is to estimate how the large and small entries are distributed among the columns of $A B$. The innovation in our solution lies in encoding this task as an optimization problem---specifically, an instance of Minimum-Cost Multiple-Choice Knapsack---and observing that this problem can be efficiently approximated. In \cref{sec:intro:sec:overview} we give a more detailed overview.

As a consequence of \cref{thm:robustifying} we obtain improved algorithms for robust sparse matrix multiplication in almost all regimes. To state these improvements precisely we introduce some notation: Let $\MM(x, y, z)$ denote the running time of algebraic dense rectangular matrix multiplication (i.e., the complexity of multiplying an $x \times y$ matrix by a $y \times z$ matrix). The state-of-the-art of sparse matrix multiplication is a recent algorithm due to Abboud, Bringmann, Fischer and Künnemann~\cite{AbboudBFK24}. Taking all three parameters $n, \IN, \OUT$ into account this algorithm runs in time
\begin{equation*}
    \widetilde\Order\parens*{\max_{\substack{x, z \leq n\\xz \leq \OUT}} \min_{1 \leq \Delta \leq n} \parens*{\Delta \IN + \MM\parens*{x, \min\set*{\frac{\IN}{\Delta}, n}, z}}}.
\end{equation*}
This complicated running time can often be conveniently bounded with respect to the matrix multiplication constants, e.g., by $\Order((\IN + \OUT)^{1.346})$~\cite{AbboudBFK24}. In light of \cref{thm:robustifying} we immediately achieve the same running time for robust sparse matrix multiplication with $k$ in place of $\OUT$:

\begin{corollary} \label{cor:in-plus-k}
Robust sparse matrix multiplication is in time $\widetilde\Order((\IN + k)^{1.346})$ (by a randomized algorithm that succeeds with high probability).
\end{corollary}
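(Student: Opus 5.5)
This follows directly from \cref{thm:robustifying}, instantiated with the sparse matrix multiplication algorithm of Abboud, Bringmann, Fischer and Künnemann~\cite{AbboudBFK24}. That algorithm runs in time $T(n, \IN, \OUT) = \Order((\IN + \OUT)^{1.346})$; we only need its behaviour as a function of $\IN$ and $\OUT$. Substituting $\OUT = k$ in \cref{thm:robustifying} gives a robust algorithm with running time
\begin{equation*}
    \Order\parens*{n \log^4 n + \log^3 n \cdot (\IN + k)^{1.346}},
\end{equation*}
which succeeds with high probability. It remains to absorb the additive $n \log^4 n$ term and the polylogarithmic factors into $\widetilde\Order((\IN + k)^{1.346})$.

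The only step needing care is the additive $n \log^4 n$ term, since $n$ can in principle exceed $\IN + k$. We handle it by a standard preprocessing step. First, discard every row of $A$ and every column of $B$ that contains no nonzero entry; such rows and columns contribute only zeros to $AB$. Second, discard every column of $A$ whose matching row of $B$ is empty, and symmetrically; these inner indices contribute nothing to any product term. Third, relabel the remaining indices, for instance by sorting or hashing, in time $\widetilde\Order(\IN)$. After this, every surviving index is incident to at least one nonzero entry, so the new dimension satisfies $n' \le \IN$. If the reduced matrices are rectangular, pad them with zero rows and columns up to $n' \times n'$; this adds no nonzeros.

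The output error $\norm{(AB)_{-k}}_F$ is unchanged by this reduction, because the removed entries of $AB$ are exactly zero. So we may apply \cref{thm:robustifying} with $n'$ in place of $n$, and then map the resulting matrix $C$ back to the original indices, setting all discarded entries to zero. The total running time is
\begin{equation*}
    \widetilde\Order(\IN) + \Order\parens*{\IN \log^4 \IN + \log^3 \IN \cdot (\IN + k)^{1.346}} = \widetilde\Order((\IN + k)^{1.346}).
\end{equation*}

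I do not expect any real obstacle: the substantive work is in \cref{thm:robustifying}. The one point to check is that the bound of~\cite{AbboudBFK24} depends only on $\IN$ and $\OUT$ and not on $n$, which is what lets us plug in $k$ for $\OUT$ directly. The success probability is inherited from \cref{thm:robustifying}.
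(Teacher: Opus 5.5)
Your proposal is correct and takes the same approach as the paper, whose proof simply plugs the bound $T(n,\IN,\OUT)=\Order((\IN+\OUT)^{1.346})$ from \cite[Theorem~1.4]{AbboudBFK24} into \cref{thm:robustifying}. Your preprocessing step (discarding empty rows, empty columns and empty inner indices so that $n' \le \IN$) is a sound way to absorb the additive $n \log^4 n$ term, a detail the paper's one-line proof leaves implicit.
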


This running time improves upon Pagh's $\widetilde\Order(\IN + n k)$-time algorithm in many regimes. Specifically, in the natural regime $\IN \leq k$, we obtain time $\widetilde\Order(k^{1.346}) = \widetilde\Order(n^{0.692} k)$ since $k \le n^2$. To allow for a clearer comparison, we also bound our running time as follows:

\begin{corollary} \label{cor:faster-than-pagh}
Robust sparse matrix multiplication is in time $\widetilde\Order(\IN + \MM(n, n, \ceil{\frac{k}{n}}) \cdot n^\epsilon)$ for any~\makebox{$\epsilon > 0$} (by a randomized algorithm that succeeds with high probability).\footnote{Let us remark that the overhead $n^\epsilon$ in \cref{cor:faster-than-pagh,cor:almost-optimal} is a technical artifact from the algebraic matrix multiplication machinery. Specifically, while many papers for the sake of simplicity state that matrix multiplication is in time $\Order(n^\omega)$, the technically correct statement is that for every $\epsilon > 0$, there is a matrix multiplication algorithm in time $\Order(n^{\omega+\epsilon})$. Here, as we are dealing with matrix multiplication very directly, we prefer the formally correct statement.}
\end{corollary}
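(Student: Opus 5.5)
We will derive \cref{cor:faster-than-pagh} from \cref{thm:robustifying} by plugging in a simple sparse matrix multiplication algorithm that is efficient when the \emph{output} is sparse. That is, we need an exact algorithm running in time $T(n, \IN, \OUT) = \widetilde\Order(\IN + \MM(n, n, \ceil{\OUT/n}) \cdot n^\epsilon)$. Substituting $\OUT = k$ then gives the claimed bound; the additive $n \log^4 n$ term is absorbed because $\MM(n,n,1) \geq n$.

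For the exact algorithm we use standard output-sensitive techniques from sparse recovery, with hashing of the columns of the output. Let $C = AB$ have at most $\OUT$ nonzeros. We handle each ``level'' separately. Assume for now that we know a bound $s$ on the maximum number of nonzeros per row. We randomly hash the column index set $[n]$ into $m = \Theta(\ceil{\OUT/n})$ buckets with random signs (or random field coefficients). Then $B' = B H$ is an $n \times m$ matrix, computable in time $\Order(\IN)$, and we compute $AB' = C H$ by dense rectangular multiplication in time $\MM(n,n,m)$.

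This handles rows of $C$ with few nonzeros. Rows with many nonzeros would collide, so we must treat non-uniform row sparsities more carefully:
\begin{itemize}
\item We partition rows by their number of nonzeros, which we estimate up front by sketching.
\item At most $\OUT/2^j$ rows have at least $2^j$ nonzeros. For those we use $m_j = \Theta(2^j)$ buckets but only $\OUT/2^j$ rows.
\item The cost at level $j$ is $\MM(\OUT/2^j, n, 2^j)$. By the standard splitting of rectangular products this is at most about $\MM(n, n, \ceil{\OUT/n})$ up to an $n^\epsilon$ factor, using that $\MM(a,n,b)$ with $ab \leq \OUT$ is maximized in balanced splittings.
\end{itemize}

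Within a bucket, each nonzero is isolated with constant probability. We recover it by the classical trick of also computing the product with column weights $j \cdot h(j)$ alongside $h(j)$; the ratio reveals the column index. We verify each recovered entry with a random inner-product check, and repeat $\Order(\log n)$ times with peeling (subtracting recovered entries) until all entries are found. Over the reals one has to be careful about cancellation, but random Gaussian or random-integer weights make isolation detection succeed with high probability.

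Alternatively, and more simply, we can invoke the known output-sensitive bound from \cite{AbboudBFK24}, specialized with $\Delta = n$, which already gives $\widetilde\Order(\IN + \max_{xz\leq \OUT}\MM(x,n,z))$. We then bound $\max_{xz \leq \OUT,\,x,z\leq n} \MM(x,n,z) \leq n^{\epsilon}\,\MM(n,n,\ceil{\OUT/n})$. This follows by splitting an $x \times n$ by $n \times z$ product into pieces and using the symmetry and monotonicity of $\MM$: with $x \geq z$ without loss of generality, we have $z \leq \OUT/x$, and padding $x$ to $n$ while shrinking $z$ accordingly only increases the cost, since $\MM(x,n,z) \le \MM(n,n, \lceil xz/n\rceil)\cdot n^{o(1)}$ by block decomposition.

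The main obstacle is exactly this last rectangular-MM inequality. We would prove it by writing the $x \times z$ output as $\lceil n/x\rceil^{-1}$-fraction blocks and appealing to the bilinear-rank characterization of $\MM$, under which $\mathrm{R}(x,n,z)$ is sub-multiplicative and invariant under permutation. This is where the $n^\epsilon$ loss arises.
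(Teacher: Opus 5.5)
Your second route is the paper's proof: bound the Abboud--Bringmann--Fischer--K\"unnemann running time by $\widetilde\Order(\IN + \max_{x,z\le n,\,xz\le \OUT}\MM(x,n,z))$, then move the maximum to the corner $\MM(n,n,\lceil \OUT/n\rceil)$, up to $n^{\epsilon}$, via permutation invariance and convexity of $\omega(\cdot,\cdot,\cdot)$. Note that this bound is the specialization $\Delta=1$, since $\Delta=n$ would leave an $n\cdot\IN$ term; also, your sub-multiplicativity argument is the source of that convexity but must be run asymptotically, as $\langle x,n,z\rangle^{\otimes N} = \langle n,n,m\rangle^{\otimes\lambda N}\otimes\langle m,n,n\rangle^{\otimes(1-\lambda)N}$ with $m=xz/n$ and $x=n^{\lambda}m^{1-\lambda}$, not as a single block decomposition, and the hashing-and-peeling first route is unnecessary (its sketching and recovery steps are unjustified), so drop it.
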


That is, we can basically replace the term $n k$ by the time complexity of multiplying an $n \times n$ by an~\makebox{$n \times \ceil{\frac{k}{n}}$} matrix, which can naively be upper bounded by $\Order(n k)$. Less naively, by using algebraic matrix multiplication we have\footnote{This time bound is clear if $k \leq n$. Otherwise, we split both matrices into submatrices of size at most~\smash{$\ceil{\frac{k}{n}} \times \ceil{\frac{k}{n}}$}. To compute the original matrix product it then suffices to compute $\Order((n^2 / k)^2)$ square matrix products of size~\smash{$\ceil{\frac{k}{n}} \times \ceil{\frac{k}{n}}$} each of which runs in time $\Order((k / n)^{\omega})$. The total time is $\Order((n^2 / k)^2 \cdot (k / n)^\omega) = \Order(nk \cdot (n / k)^{3-\omega})$. We remark that this running time can be improved by applying rectangular matrix multiplication for any $k = \Theta(n^\gamma)$ with $1 < \gamma \leq 2$.}
\begin{equation*}
    \MM(n, n, \ceil{\frac{k}{n}}) = \Order(n^2 + nk \cdot (n / k)^{3-\omega}) = \Order(n^2 + nk \cdot (n / k)^{0.6284}),
\end{equation*}
yielding a polynomial improvement over Pagh's running time $\widetilde\Order(\IN + n k)$ whenever $k \geq n^{1+\epsilon}$ and~\smash{$nk \ge \IN^{1+\epsilon}$} (i.e., Pagh's algorithm does not already run in almost-linear time).

\smallskip
The Abboud--Bringmann--Fischer--Künnemann algorithm allows for various other trade-offs with respect to the relation between $\IN$ and $k$, see~\cite{AbboudBFK24}. Here we will only mention one last interesting setting: When $k$ is sufficiently large, namely $k \geq \IN^{1.762}$, then we obtain an \emph{unconditionally almost-optimal} algorithm:

\begin{corollary} \label{cor:almost-optimal}
If $k \geq \IN^{1.762}$, then robust sparse matrix multiplication is in time $\Order(k^{1+\epsilon})$ for any~\makebox{$\epsilon > 0$} (by a randomized algorithm that succeeds with high probability). 
\end{corollary}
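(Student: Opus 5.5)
This follows by plugging the Abboud--Bringmann--Fischer--Künnemann algorithm~\cite{AbboudBFK24} into the reduction of \cref{thm:robustifying}. The reduction costs $\widetilde\Order(T(n,\IN,k))$ plus an additive $\Order(n\log^4 n)$. So it suffices to show two things: first, that $T(n,\IN,k) = \Order(k^{1+\epsilon})$ whenever $k \ge \IN^{1.762}$; second, that the additive $n$-term and the polylogarithmic factors are absorbed into $k^{1+\epsilon}$.

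For the additive term, we first preprocess so that $n$ is effectively at most $\Order(\IN)$. Rows of $A$ and columns of $B$ that are entirely zero contribute nothing to $AB$. We delete them, together with the corresponding all-zero rows and columns of the inner dimension, and relabel the remaining indices in $\Order(\IN)$ time, for instance by hashing. After this, $n \le \IN \le k$, so $n\log^4 n = \Order(k^{1+\epsilon})$. Likewise $\log^3 n \le k^{\epsilon/2}$ once $k$ exceeds a constant. Finally, if $k \ge n^2$ the problem is just exact multiplication with $\OUT \le k$, and the same argument applies.

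The main work is bounding the ABFK running time at $\OUT=k$:
\[
\max_{x,z\le n,\ xz\le k}\ \min_{\Delta}\ \big(\Delta\IN + \MM(x,\min\{\IN/\Delta,n\},z)\big).
\]
We write $\IN = k^{\beta}$ with $\beta \le 1/1.762$ and set $\Delta = k^{\delta}$. The first term $\Delta\IN = k^{\beta+\delta}$ is at most $k$ whenever $\delta \le 1-\beta$. For the matrix-multiplication term we use that $xz \le k$ and that the inner dimension is $y = \IN/\Delta = k^{\beta-\delta}$.

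Without loss of generality $x \ge z$. We then use the known fact that $\MM(x,y,z) = \Order((xz)^{1+\epsilon})$ whenever $y \le \min(x,z)^{\alpha}$, where $\alpha > 0.321$ is the rectangular matrix multiplication exponent. If $z$ is large, say $z \ge y^{1/\alpha}$, this gives $\Order(k^{1+\epsilon})$. If $z$ is small, we instead bound $\MM(x,y,z) \le x\cdot y\cdot z \le ?$, or we split into square-ish blocks. The cleanest route is to invoke the ABFK paper's own corollary, which says their algorithm runs in $\OUT^{1+\order(1)}$ time when $\OUT \ge \IN^{1.762}$; the constant $1.762$ there comes exactly from balancing $\alpha$ and $\omega$.

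If we must re-derive it, the approach is to choose $\delta$ and balance the terms by case analysis on $z$ versus $y$. In the small-$z$ case, the decisive observation is that the entire product can be computed combinatorially in time $\Order(\IN\cdot\min(x,z)) \le \Order(\IN\cdot\sqrt{k})$, which is at most $k$ when $\IN \le \sqrt k$. In the intermediate range we use rectangular matrix multiplication with $\omega$ and $\alpha$. This case analysis, matching the exact threshold $1.762$, is the step I expect to be the main obstacle, and I would defer to the corresponding almost-optimality statement in~\cite{AbboudBFK24} rather than recompute the exponent.

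Combining the two parts, the total time is $\widetilde\Order(k^{1+\epsilon/2}) = \Order(k^{1+\epsilon})$, and the algorithm succeeds with high probability by \cref{thm:robustifying}.
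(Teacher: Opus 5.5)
Your proposal is correct and matches the paper's proof. The paper likewise plugs the almost-optimal regime of \cite{AbboudBFK24} into \cref{thm:robustifying}: their Theorem~1.7 combined with Lemma~4.8 and the dual exponent $\alpha \geq 0.31389$ gives $T(n,\IN,\OUT) = \Order(\OUT^{1+\epsilon})$ once $\OUT \geq \IN^{1+1/(1+\alpha)}$, a threshold that depends only on $\alpha$, not on $\omega$.

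Your preprocessing, which removes all-zero rows and columns so that $n \leq \IN \leq k$ and thereby absorbs the additive $\Order(n \log^4 n)$ term and the $\log^3 n$ factor, is a valid detail that the paper leaves implicit. Your partial re-derivation of the threshold is unnecessary, and it is incomplete as written: the combinatorial $\Order(\IN\sqrt{k})$ bound alone only covers $k \geq \IN^2$.
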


Indeed, observe that any algorithm requires time $\Omega(k)$ simply to write the output and therefore the time in \cref{cor:almost-optimal} is optimal up to the factor $k^\epsilon$. This corollary is ultimately based on the well-known fact that matrix multiplication of sufficiently \emph{rectangular} matrices is in almost-optimal time (i.e., $\MM(n, n, n^{0.31389}) = \Order(n^{2+\epsilon})$ for any $\epsilon > 0$, where the constant 0.31389, sometimes called the \emph{dual matrix multiplication exponent}, is the state of the art in a series of improvements~\cite{Coppersmith82,Coppersmith97,Gall14,GallU18,VassilevskaXXZ24}).

\subsection{Related Work: Approximate Matrix Multiplication} \label{sec:intro:sec:related-work}
Our research is also closely related to the \emph{approximate matrix multiplication} problem. Here the goal is similarly to approximate $A B$ by some matrix $C$ in the same sense of bounding some norm of the matrix $A B - C$ (such as the Frobenius or spectral norm). The conceptual difference is that~$A B$ is not assumed to be sparse or sparse up to noise, and therefore the goal is not to bound the error in terms of the insignificant entries in $A B$ (i.e., $\norm{(A B)_{-k}}_F$) but rather in terms of all entries in $A B$ (i.e., say,~$\norm{A B}_F$). The typical guarantee is that one can compute some matrix $C$ in time~\smash{$\widetilde\Order(n^2 c)$}, for some parameter $c$, such that
\begin{equation*}
    \norm{A B - C}_F \leq \frac{1}{\sqrt c} \norm{A}_F \norm{B}_F.
\end{equation*}
This can be achieved by two seminal approaches: \emph{sampling}-based approaches due to Cohen and Lewis~\cite{CohenL99} and Drineas, Kannan and Mahoney~\cite{DrineasK01,DrineasKM06}, and \emph{projection}-based approaches as introduced by Sarl{\'{o}}s~\cite{Sarlos06}. These two approaches form the basis of a plethora of later refinements~\cite{Wu18,YangM23} and extensions to other models such as streaming~\cite{ClarksonW09}. There was also some recent progress on incorporating fast matrix multiplication techniques in this setup~\cite{UffenheimerW25}.

All of these works are incomparable to our results: Approximate matrix multiplication is well-suited for dense and \emph{flat} matrices (e.g., when most entries have roughly the same size) or whenever an $\ell_2 / \ell_2$ guarantee is sufficient, whereas robust sparse matrix multiplication is preferable for \emph{spiky} matrices or whenever an $\ell_\infty / \ell_2$ guarantee is necessary. Having said that, one can of course simply apply robust matrix multiplication algorithms to the approximate setting (disregarding the strong robust guarantee). For instance, Pagh applied his algorithm to this setting to achieve a bound of
\begin{equation*}
    \norm{A B - C}_F \leq \sqrt{\frac{n}{c}} \cdot \norm{A B}_F
\end{equation*}
in time $\widetilde O(n^2 c)$. This statement was also very recently obtained in an arXiv preprint~\cite{UffenheimerW26} (though notably without the stronger robust guarantee). When applying our faster algorithm we obtain the strictly stronger accuracy bound of
\begin{equation*}
    \norm{A B - C}_F \leq \sqrt{\frac{n}{c^{\frac{1}{\omega - 2+\epsilon}}}} \cdot \norm{A B}_F \leq \sqrt{\frac{n}{c^{2.963}}} \cdot \norm{A B}_F
\end{equation*}
in the same time budget $\widetilde O(n^2 c)$.\footnote{To see this, bound the running time of our algorithm from \cref{cor:faster-than-pagh} by $\MM(n, n, k/n) \leq O(n^4 / k^2 \cdot \MM(k/n)) \leq O(n^{4-\omega+\epsilon} k^{\omega+\epsilon-2})$, for any arbitrarily small constant $\epsilon > 0$. Then, setting $k$ such that $c = (k / n)^{\omega+\epsilon-2}$ the running time becomes $O(n^2 c)$ and the guarantee becomes as claimed before.} To reiterate, both Pagh's and our algorithm actually obtain stronger guarantees in terms of $\norm{(A B)_{-k}}_F$ for some $k \geq \Omega(n/c)$.

\subsection{Technical Overview} \label{sec:intro:sec:overview}
In this section we will briefly outline the main ideas behind our results. For illustration, we consider the following query problem. Let $X \in \mathbb{R}^{n \times n}$ be an unknown matrix. We can access $X$ by querying the product $L X R$ for a matrix $L \in \{-1,0,1\}^{\ell \times n}$ with column sparsity 1 and a matrix $R \in \{-1,0,1\}^{n \times r}$ with row sparsity 1 of our choice; the cost of this query is $\ell \cdot r$. The task is to compute an approximation $X'$ of~$X$ with entry-wise error at most \smash{$\frac 1{\sqrt{k}} \norm{X_{-k}}_F$}, using \smash{$\widetilde{O}(1)$} queries of total cost $\widetilde{O}(k)$.\footnote{As we will later see, it is often sufficient to only consider one side of the sketch (say $L$) and leave the other side (say $R$) trivial. Moreover, it is of course equivalent to assume that the column-sparsity of $L$ and row-sparsity of $R$ is $\tilde O(1)$ instead of $1$.}

\subparagraph*{Connection to Robust Matrix Multiplication}
Let us first explain the connection of the query problem to robust matrix multiplication. Suppose we want to multiply two $\IN$-sparse $n \times n$-matrices $A$ and $B$. The overall goal is to \emph{compress} the matrices~\makebox{$A, B \in \Real^{n \times n}$} to \emph{denser rectangular} matrices $A' \in \Real^{x \times n}, B' \in \Real^{n \times z}$ in such a way that the product of the outer dimensions becomes small, i.e., $x z \leq \widetilde{O}(k)$. Since the number of non-zeros in the matrix product~$A' B'$ is trivially bounded by $x z \leq \widetilde{O}(k)$, we can apply in a black-box manner any sparse matrix multiplication algorithm to evaluate $A' B'$. The challenge is to design the compression in such a way that we can recover the large entries of~$A B$ from~$A' B'$.
(This general framework has been implicitly or explictly employed in various forms in previous work on \emph{exact} matrix multiplication~\cite{IwenS09,Pagh13,JacobS15,Stockel15,GuchtWWZ15,AbboudBFK24}, specifically, compared to~\cite{AbboudBFK24} our goal is to replace the ``densification'' step by a robust alternative.)

The above query problem models this compression:
We set the unknown matrix to $X := AB$. We can answer a query $L X R$ as follows. First compute $L A$ in time $O(\IN)$, using that $L$ has column sparsity 1. Symmetrically compute $B R$ in time $O(\IN)$, using that $R$ has row sparsity~1. The remaining product $(L A)(B R)$ has outer dimensions $\ell$ and $r$. Thus, if each query has cost $\ell \cdot r = \widetilde{O}(k)$, then each product $(L A)(B R)$ is compressed in the above sense. Therefore, any algorithm for the query problem yields a reduction from robust matrix multiplication to sparse matrix multiplication. 

\medskip
In the remainder we discuss how to solve the query problem. 
 
\subparagraph*{Warm-Up: The Perfectly Balanced Case}
To demonstrate that the sparse recovery toolkit is applicable, consider first the toy case in which all columns in $X$ contain exactly the same number of significant entries, i.e., $t := k / n$ many. Suppose further that the noise is evenly distributed among all columns, i.e.,
\begin{equation*}
    \norm{(X e_j)_{-t}}_2^2 \leq \frac{\norm{X_{-k}}_F^2}{n}
\end{equation*}
for each column vector $X e_j$ of $X$ (here $e_j$ denotes the $j$-th unit vector).

The known $\ell_\infty/\ell_2$-sparse recovery algorithms, e.g.~\cite[Theorem~1]{KaneNPW11} can be phrased as follows (see \cref{lem:heavy-hitters}): There is a matrix~$H$ of size~\smash{$\widetilde\Order(t) \times n$} such that, given $H x$ for some unknown vector~\makebox{$x \in \Real^n$}, we can efficiently compute a vector $x'$ so that~\smash{$\norm{x - x'}_\infty \leq \frac{1}{\sqrt t} \|x_{-t}\|_2$} (with good probability). Moreover, the matrix $H$ has at most \smash{$\widetilde{O}(1)$} nonzero entries per column, and can be constructed in time~\smash{$\widetilde\Order(n)$}. A natural idea is to compute $HX$. Since $H$ has column sparsity \smash{$\widetilde{O}(1)$}, we can write it as a sum \smash{$H = L_1+\ldots + L_{\widetilde{O}(1)}$} where each $L_i$ has column sparsity 1. Then $L = L_i$ and the identity matrix $R = I_n$ is a valid query, and by summing up the query results over all $i$ we obtain $HX$.
This choice indeed satisfies our requirements: It uses $\widetilde{O}(1)$ many queries and, since $H$ has $\widetilde{O}(t)$ rows and $I_n$ has $n$ columns, each query has cost $\widetilde{O}(t) \cdot n = \widetilde{O}(k)$. 
Moreover, from each column vector in~\makebox{$H X$} we can efficiently recover an approximation $x'_j$ of the corresponding column vector~$X e_j$ in $X$ with an error of
\begin{equation} \label{eq:linfty-overview}
    \norm{X e_j - x'_j}_\infty \leq \frac{1}{\sqrt{t}} \cdot \norm{(X e_j)_{-t}}_2 \leq \frac{1}{\sqrt{tn}} \cdot \norm{X_{-k}}_F = \frac{1}{\sqrt k} \cdot \norm{X_{-k}}_F.
\end{equation}
This is exactly as hoped. However, we critically relied on the unrealistic assumption that both the~$k$ largest entries and the noise is evenly distributed among all columns.

\subparagraph*{The General Unbalanced Case}
In the general case we cannot wish for this perfect distribution. Instead, we have to deal with some columns containing more significant entries and/or more noise than others. It could even be that one column has only one significant entry and the $\ell_2$ mass of the rest of the elements is much larger than that entry; this means that we cannot hope to recover that single element with $\ell_\infty/\ell_2$ sparse recovery algorithm with $t=1$ but have to assign a much larger budget. It turns out that the ideas from the toy case would generalize if we would know both the number of significant entries $k_j$ as well as the noise $\norm{(X e_j)_{-k_j}}_2$ for each column~$j$. In this case we could split the columns into $\polylog(n)$ groups where in each group all columns share the same statistics (up to constant factors); then the previous approach applies to each group. So perhaps a first instinct is to estimate $k_j$ and $\norm{(X e_j)_{-k_j}}_2$ for each column in a preprocessing step. Unfortunately, this appears to be quite challenging---in particular, but not exclusively, because $k_j$ depends on the entire matrix and not only on the $j$-th column.

Our solution is a more modest implementation of this approach, where our goal is to assign to each column a \emph{budget} $t_j$. On the one hand, the budget should satisfy that
\begin{equation} \label{eq:accuracy}
    \frac{1}{\sqrt{t_j}} \cdot \norm{(X e_j)_{-t_j}}_2 \leq \frac{1}{\sqrt k} \cdot \norm{X_{-k}}_F;
\end{equation}
note that this is exactly the accuracy we require in~\eqref{eq:linfty-overview}. However, this condition can easily be enforced by assigning the generous budgets $t_j = n$ to each column. To avoid this we require, on the other hand, that the total budget is small, i.e.,
\begin{equation} \label{eq:efficiency}
    \sum_{j \in [n]} t_j \leq \Order(k). 
\end{equation}

With the approach outlined before we have reduced the whole problem to finding column budgets $t_1, \dots, t_n$ satisfying both conditions~\eqref{eq:accuracy} and~\eqref{eq:efficiency}. This is a natural subtask to which we refer as a ``budget allocation sketch''. Our design of a budget allocation sketch (\cref{thm:budget-alloc}) involves two steps that we outline next; they can be seen as the main innovation of this paper. We are confident that this tool finds more independent applications in the future. As proof of concept, in \cref{sec:distributed} we present a completely different application in a distributed communication scenario.



\subparagraph*{Budget Allocation: Step 1}
Our first step is to estimate $\|X_{-k}\|_F$ in the query setting. This step is surprisingly intricate in light of estimating $\norm{x_{-k}}_2$ for a \emph{vector} $x$ being already known~\cite[Lemma~1.5]{NakosS19}. 
Applied to our setting this merely allows us to estimate $\norm{(X e_j)_{-k}}_2$ for the individual columns of $X$. More generally, we can estimate $\norm{(X e_j)_{-t}}_2$ for all powers $t = 2^0, 2^1, \dots$. Here, by \emph{estimating} we mean a bi-criteria approximation $v_{j, t}$ with constant estimation error as well as a constant loss in the threshold parameter $t$:
\begin{equation*}
    \Omega(1) \cdot \norm{(X e_j)_{-\Order(t)}}_2^2 \leq v_{j, t} \leq \norm{(X e_j)_{-t}}_2^2.
\end{equation*}
Our insight is that, given these approximations $v_{j, t}$, we can read off an estimate (more precisely, a bi-criteria approximation) of $\norm{X_{-k}}_F$ as the optimal value of the following optimization problem: Select, for each column $j$, a value $t_j$ (as a power of $2$) in order to
\begin{alignat*}{2}
    &\text{minimize}\, &&\sum_{j \in [n]} v_{j, t_j} \\
    &\text{subject to}\, &&\sum_{j \in [n]} t_j \leq 2k.
\end{alignat*}
For one direction of the claimed statement, it is easy to see that plugging in $t_j = k_j$ (rounded up to a power of $2$) satisfies the constraint and therefore $\norm{X_{-k}}_F$ provides a valid upper bound on the optimal value. For the other direction, we refer the reader to the formal proof of \cref{lem:frobenius-tail-approx}. Crucially, even solving the above optimization problem \emph{approximately} suffices for our goal.

Computationally, it remains to solve this optimization problem. Our observation is that this problem can be cast as an instance of the \emph{Minimum-Cost Multiple-Choice Knapsack} problem. Here an instance consists of $n$ \emph{items} each associated with a \emph{cost $c_i$} and a \emph{weight $w_i$}. Moreover, the items are partitioned into \emph{bundles.} The goal is to select a subset $I \subseteq [n]$ of items---exactly one item from each bundle---so that the total weight is $\sum_{i \in I} w_i \leq W$ (for some given weight threshold $W$) and so that the total cost $\sum_{i \in I} c_i$ is minimized. This problem has not been explicitly considered before (to the best of our knowledge), but is closely related to the classical \emph{0-1-Knapsack} problem (where the goal is to \emph{maximize} the total \emph{profit} subject to a weight constraint) and the \emph{Multiple Choice Knapsack} problem (where items are similarly grouped into bundles). While both of these problems are NP-hard, it is known that they can be $(1+\epsilon)$-approximated in near-linear time $\widetilde\Order(n \poly(\epsilon^{-1}))$, as was shown for 0-1-Knapsack by Ibarra and Kim~\cite{IbarraK75} and for Multiple Choice Knapsack by Rhee~\cite{Rhee15}. A long line of research has further optimized the polynomial dependence on~$\epsilon$ for 0\=/1\=/Knapsack~\cite{Lawler79,KellererP04,Rhee15,Chan18,Jin19,DengJM23,Mao24,ChenLMZ24b}. We employ Chan's framework for 0-1-Knapsack~\cite{Chan18}, as it transfers nicely also to the Minimum-Cost version of the problem (with some minor tweaks as described in \cref{sec:knapsack} due to space constraints). The consequence is that we can $(1 + \epsilon)$-approximate Minimum-Cost Multiple-Choice Knapsack in near-linear time, and thereby approximate $\norm{X_{-k}}_F$ as desired.\footnote{In light of this surprisingly involved algorithm to approximate $\norm{X_{-k}}_F$, the critical reader might wonder whether we can simplify here. A first instinct could be to not approximate $\norm{X_{-k}}_F$ at all, but instead to spend a logarithmic number of guesses for $\norm{X_{-k}}_F$ (up to a constant factor). Unfortunately, this approach has two drawbacks: First, this approach would lead to a running time overhead of $\log(n\Delta)$ (where $\Delta$ is the largest entry in the given matrices) which we can avoid with the method outlined above (in other words, the approximation via Minimum-Cost Multiple Choice Knapsack has \emph{strongly} polynomial running time). Second, even if we are willing to take this overhead into account, the more serious problem for our application to matrix multiplication is to verify which of the guesses was valid in the end---that is, we would have to verify whether~\smash{$\norm{X - X'}_\infty \leq \frac{1}{\sqrt k} \cdot \norm{X_{-k}}_F$}. This appears to be a quite nontrivial problem and we see no approach other than estimating $\norm{X_{-k}}_F$.}

\subparagraph*{Budget Allocation: Step 2}
The second step is to select the column budgets $t_1, \dots, t_n$. In fact, simply taking values $t_1, \dots, t_n$ of an (approximately) optimal solution to the previous optimization problem already gives a nontrivial budget allocation that will eventually lead to an $\ell_2/\ell_2$ guarantee for our query problem (i.e., we could compute a matrix $X'$ such that $\norm{X - X'}_F \leq (1 + \epsilon) \cdot \norm{X_{-k}}_F$). However, it turns out that with little extra effort we can do better and find budgets that lead to the stronger $\ell_\infty / \ell_2$-guarantee: For each column $j$ select the budget $t_j$ to be the smallest value $t$ satisfying that
\begin{equation*}
    \frac{v_{j, t}}{t} \leq \frac{w}{k},
\end{equation*}
where $w \approx \norm{X_{-k}}_F$ is the approximation computed in step 1. In this way we make sure that no column receives a prohibitively small budget, and it follows from some calculations that the total budget is still bounded by $\Order(k)$ (though with a larger constant); see \cref{lem:col-budgets} for more details.
Notably, the budget allocation routine runs altogether in \emph{near-linear} time in the input sparsity. 


\subsection{Outline}
We start with some preliminaries in \cref{sec:prelims}. In \cref{sec:budget-alloc} we give the formal proof of our key budget allocation step, deferring the discussion of the knapsack-type problem to \cref{sec:knapsack}. We then conclude our the robustifying reduction in \cref{sec:robustifying}. In \cref{sec:sparse-recovery} we give some missing proofs for tools from sparse recovery. Finally, in \cref{sec:distributed} we present an interesting alternative application of our new budget allocation sketch for a distributed heavy hitter problem.
\section{Preliminaries} \label{sec:prelims}
We set $[n] = \set{1, \dots, n}$, and write $\polylog(n) = (\log n)^{\Order(1)}$ and \smash{$\widetilde\Order(n) = n \cdot \polylog(n)$}.

\subparagraph*{Matrix Notation}
Let $A \in \Real^{n \times n}$ be a matrix. We write $A[i, j]$ for the entry at position $(i, j)$. The \emph{Frobenius} norm is defined as $\norm{A}_F = (\sum_{i, j} A[i, j]^2)^{1/2}$, and the $\ell_\infty$-norm is defined as $\norm{A}_\infty = \max_{i, j} |A[i, j]|$. We write $A_{-k}$ for the matrix obtained from $A$ after zeroing out the largest $k$ entries in absolute value (breaking ties in an arbitrary but consistent way). We say that $A$ is \emph{$s$-sparse} if it contains at most~$s$ nonzero entries, and we say that $A$ is \emph{$s$-column sparse} if each column of $A$ contains at most~$s$ nonzero entries.

\subparagraph*{Machine Model}
We work in the standard WordRAM model with word size $\Theta(\log n)$ (where~$n$ is the dimension). To simplify notation, throughout the paper we implicitly assume that all vectors $x \in \Real^n$ and matrices $A \in \Real^{m \times n}$ have entries that can be represented in a $O(\log n)$-bit fixed-precision format; the same applies to complex-valued vectors and matrices. This means that each entry can be stored in a machine word (or in a constant number of machine words). 

Alternatively, our results also hold on the RealRAM model, where a machine word can store an arbitrary real number; in this case vectors and matrices can have arbitrary real entries.

\subparagraph*{Randomization}
We say that an event happens \emph{with high probability} if it happens with probability $1 - 1/n^c$ for an arbitrarily large prespecified constant $c$. Unless stated otherwise, all algorithms are Monte Carlo algorithms that succeed with high probability.

\subsection{Tools from Sparse Recovery}
Throughout we make extensive use of techniques from sparse recovery. Specifically, we will rely on the following two lemmas; the proofs are deferred to \cref{sec:sparse-recovery}. The following lemma is basically the same as~\cite[Lemma 1.4]{NakosS19}. The only difference is that we use random signs instead of Gaussian random variables used in that proof in order to achieve anti-concentration needed for the left-hand side of the inequality.

\begin{restatable}[$\ell_2$-Tail Estimation]{lemma}{lemltwotailapprox} \label{lem:l2-tail-approx}
Let $n \geq 1, n \ge s \geq 0$ be integers and let $\delta > 0$. In time and space $O(\log(1/\delta))$ we can sample a matrix $S^{(s)} \in \set{-1,0,1}^{m \times n}$ with $m = \Order(\log(1/\delta))$ rows and the following properties. (i) Any entry of $S^{(s)}$ can be computed in time $O(1)$. (ii) For any $x \in \Real^n$, given $S^{(s)} x$ we can compute in time $O(\log (1/\delta))$ a value $v$ which satisfies with probability at least $1 - \delta$ that 
\begin{equation*}
    \frac{1}{64} \cdot \norm{x_{-512 s}}_2^2 \leq v \leq \norm{x_{-s}}_2^2.
\end{equation*}
\end{restatable}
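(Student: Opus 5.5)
We use the standard ``hash into buckets with random signs, then take the median'' construction; it is essentially the one in \cite[Lemma 1.4]{NakosS19}. Let $B = c\cdot s$ buckets for a suitable constant $c$ (say $c=32$), and $B=1$ when $s=0$. Let $R=\Order(\log(1/\delta))$ be the number of independent repetitions. For each repetition $r\in[R]$ we pick a hash function $h_r\colon[n]\to[B]$ and a sign function $\sigma_r\colon[n]\to\{-1,1\}$.

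For each $r$ we only sample a single bucket $b_r\in[B]$ uniformly at random, and keep one row
\[
(S^{(s)})_{r,i}=\sigma_r(i)\cdot[h_r(i)=b_r].
\]
This makes $m=R$, and each entry is computable in $O(1)$ time provided $h_r,\sigma_r$ are drawn from $O(1)$-wise independent families with $O(1)$ evaluation time. Storing $R$ seeds costs $O(\log(1/\delta))$ time and space. Then $y_r=\langle S_r,x\rangle=\sum_{i:h_r(i)=b_r}\sigma_r(i)x_i$. The output is $v=\alpha\cdot B\cdot\mathrm{median}_r\, y_r^2$ for a suitable constant $\alpha$, computed in $O(R)$ time by linear-time selection.

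For a single repetition, let $H$ be the set of the $s$ largest coordinates of $x$. We prove two constant-probability events, each with probability strictly above $1/2$. The claimed failure bound $\delta$ then follows from Chernoff over the median.

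\emph{Upper bound.} With probability at least $1-s/B\ge 1-1/c$, the bucket $b_r$ contains no element of $H$. Conditioned on this, $\mathbb E[y_r^2]\le\norm{x_{-s}}_2^2/B$ by pairwise independence of the signs and of the hash. Markov then gives $B y_r^2\le 8\norm{x_{-s}}_2^2$ with probability at least $7/8$. In total, with probability about $0.84>1/2$, we get $By_r^2\le 8\norm{x_{-s}}_2^2$. Taking $\alpha\le 1/8$ yields $v\le\norm{x_{-s}}_2^2$.

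\emph{Lower bound.} This is the main obstacle, and it is where the random signs replace Gaussians. Let $T$ be the top $512s$ coordinates and $z=x_{-512s}$, so $\norm{z}_\infty^2\le\norm{x_{-s}}^2_2/(511s)$ roughly. More usefully, every coordinate of $z$ is at most the smallest coordinate of $T$ in absolute value, so $\norm{z}_\infty^2\le\norm{x_T}_2^2/(512s)$. If $\norm{z}_2^2$ is small compared to this there is nothing to show, so the hard case is a spread-out tail.

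Let $Z=\sum_{i\in b_r}\sigma_r(i)z_i$. Condition on the hash, and use Paley--Zygmund together with the Khintchine-type fourth-moment bound $\mathbb E[Z^4]\le 3(\mathbb E Z^2)^2$ for Rademacher sums (this needs $4$-wise independent signs). This gives $\Pr[Z^2\ge\frac12\mathbb E[Z^2\mid h]]\ge 1/12$.

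We also need the bucket mass $\mathbb E[Z^2\mid h]=\norm{z_{b_r}}_2^2$ to be at least $\norm{z}_2^2/(2B)$ with constant probability. Its mean is $\norm z_2^2/B$. Each coordinate is bounded by $\norm{z}_\infty^2\le\norm{z}_2^2/(512s)\cdot(\ldots)$, which controls the variance, and a second-moment/Paley--Zygmund argument over the random bucket gives this.

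The heavy coordinates $x_H$ colliding into the bucket only add a sign-independent term. Writing $y_r=Z+W$ with independent signs, $y_r^2\ge Z^2$ holds with probability at least $1/2$ by symmetry of the sign of $W$ relative to $Z$.

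Combining these events gives only a small constant probability $p_0$, not more than $1/2$. So the median must instead be replaced by a lower quantile: set $v=\alpha B\cdot(\text{the } (p_0/2)R\text{-th largest } y_r^2)$. Then by Chernoff, with probability $1-\delta$ this quantile is at least $\Omega(\norm{z}_2^2/B)$. The upper-bound analysis is also applied at that quantile, by pushing the Markov failure probability below $p_0/4$ through a larger constant. This is why the final constants are as loose as $1/64$ and $512$. Choosing $c$ and $\alpha$ appropriately then gives $\frac1{64}\norm{x_{-512s}}_2^2\le v\le\norm{x_{-s}}_2^2$. The delicate part is exactly this balancing of constants, together with the case where the tail mass is concentrated on few coordinates, which the choice of $512s$ versus $B=cs$ is meant to exclude.
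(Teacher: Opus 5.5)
Your lower bound has a genuine gap. After splitting off the head, you need the random bucket to capture a constant fraction of the tail mass, i.e.\ $\|z_{b_r}\|_2^2 \ge \|z\|_2^2/(2B)$ with constant probability for $z = x_{-512s}$. Your second-moment argument for this needs $\|z\|_\infty^2 = O(\|z\|_2^2/B)$, and nothing provides that. The only bound you have, $\|z\|_\infty^2 \le \|x_T\|_2^2/(512s)$, compares $z$ with the head rather than with its own mass; this is what the ``$(\ldots)$'' in your write-up hides. For a concrete failure, let $x$ have $512s$ entries equal to $2$, one entry equal to $1$, and all other entries $0$. Then $z$ is $1$-sparse with $\|z\|_2^2 = 1$, so the bucket contains any tail mass only with probability $1/B = 1/(cs)$, which is not bounded below by a constant uniformly in $s$. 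With probability $1 - 1/(cs)$ you have $Z = 0$, and your lower bound on $y_r^2$ is vacuous. The choice of $512s$ versus $B = cs$ does not exclude such concentrated tails, contrary to your closing remark. The lemma still holds on this example, but only because about $512/c$ head entries of magnitude $2$ fall into the bucket, and your step $y_r^2 \ge Z^2$ throws exactly that mass away.

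The missing idea, which is how the paper argues, is to lower-bound through a truncation of all coordinates: set $y_i = \min\{x_i^2, \|x_{-512s}\|_2^2/(512s)\}$. If at least $512s$ coordinates exceed the threshold, the top $512s$ alone contribute $\|x_{-512s}\|_2^2$ to $\|y\|_1$; otherwise no coordinate outside the top $512s$ is truncated. Either way $\|y\|_1 \ge \|x_{-512s}\|_2^2$ while every $y_i \le \|y\|_1/(512s)$. Chebyshev then shows that the sampled truncated mass, which lower-bounds the sampled $\ell_2^2$-mass, is at least half its mean with constant probability. In your framework you would then apply Paley--Zygmund to the whole bucket sum $y_r$, whose conditional second moment is $\sum_{i\in b_r} x_i^2 \ge \sum_{i \in b_r} y_i$. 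This also removes your claim that $y_r^2 \ge Z^2$ with probability $1/2$ ``by symmetry'', which needs the head part $W$ to be independent of $Z$ and is not implied by $4$-wise independent signs. Even so, your lossy chain (Paley--Zygmund, a low quantile, Markov pushed below $p_0/4$) does not obviously yield the specific constants $1/64$ and $512$ of the statement. The paper avoids signs and anti-concentration altogether: it samples $T$ pairwise independently with rate $1/(16s)$ and estimates $\|\Pi_T x\|_2^2$ with an AMS sketch, which is accurate in both directions. Each of the two bounds then fails with probability at most $1/8$, and a plain median over $\Theta(\log(1/\delta))$ repetitions suffices.
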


The following lemma also follows from~\cite{KaneNPW11,Pagh13,LarsenNNT16}. In \cref{sec:sparse-recovery} we give an accessible self-contained proof which is close in spirit to the one in~\cite{Pagh13}.

\begin{restatable}[Fast Heavy-Hitter Recovery]{lemma}{lemheavyhitters} \label{lem:heavy-hitters}
Let $n \geq 1, s \geq 0$ be integers and let $\delta > 0$. In time and space $O(\log n \log(s /\delta))$ we can sample a matrix $H^{(s)} \in \{-1,0,1\}^{m \times n}$ with $m = \Order(s \log n \log(s/\delta))$ rows and the following properties. (i) The column sparsity of $H^{(s)}$ is $O(\log n\log(s / \delta))$, and any column of the $H^{(s)}$ can be computed in time $O(\log n \log(s / \delta))$. (ii) For any $x \in \Real^n$, given $H^{(s)} x$ we can compute in time $\Order(m)$ an $\Order(s)$-sparse vector $x'$ which satisfies with probability at least $1 - \delta$ that
\begin{equation*}
    \norm{x - x'}_\infty \leq \frac{1}{\sqrt{s+1}} \cdot \norm{x_{-s}}_2.
\end{equation*}
\end{restatable}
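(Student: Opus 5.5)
We prove \cref{lem:heavy-hitters} with a CountSketch-style construction: $\Order(\log(s/\delta))$ independent repetitions of hashing into $\Order(s)$ buckets with random signs, plus a bit-test structure of $\Order(\log n)$ rows per bucket so that heavy coordinates can be identified in time proportional to the sketch size rather than~$n$. The steps are ordered so that the obvious recovery procedure only needs a per-coordinate error bound, which the first step supplies; the decoding in the third step is where the real difficulty lies.

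\textbf{Step 1 (one repetition, one fixed coordinate).} Pick a pairwise independent hash function $h\colon[n]\to[B]$ with $B = c\cdot s$ for a large constant~$c$, and a $4$-wise independent sign function $\sigma\colon[n]\to\set{-1,1}$. The bucket values are $y_b = \sum_{h(i)=b}\sigma(i)x_i$. Fix $i$ and let $T$ be the set of the $s$ largest coordinates of~$x$. With probability at least $1 - s/B \ge 0.9$, no element of $T\setminus\set{i}$ collides with~$i$. The contribution of the remaining coordinates has second moment at most $\norm{x_{-s}}_2^2/B$. By Chebyshev, with probability at least $0.8$ we get
\[
|\sigma(i)y_{h(i)} - x_i| \le \tfrac{1}{\sqrt{s+1}}\,\tfrac{1}{3}\,\norm{x_{-s}}_2 .
\]

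\textbf{Step 2 (identifying heavy buckets).} To locate heavy coordinates without scanning $[n]$, each bucket also stores $\log n$ bit-tests:
\[
y_{b,\ell} = \sum_{h(i)=b,\ \mathrm{bit}_\ell(i)=1}\sigma(i)x_i ,
\]
together with the complementary sums. This gives $m = \Order(s\log n)$ rows per repetition. All entries lie in $\set{-1,0,1}$, and the column sparsity is $\Order(\log n)$ per repetition. We take $R = \Order(\log(s/\delta))$ repetitions, giving $m = \Order(s\log n\log(s/\delta))$ rows in total and column sparsity $\Order(\log n\log(s/\delta))$. Storing the hash seeds takes $\Order(\log(s/\delta))$ words, and any column can be generated in the stated time, so property~(i) holds.

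\textbf{Step 3 (decoding).} In each repetition, we decode every bucket by majority bit-test: for each bit $\ell$, compare $|y_{b,\ell}|$ with $|y_b - y_{b,\ell}|$. This produces a candidate index. The candidates from all repetitions form a list of $\Order(sR)$ indices. For each candidate we take the median estimate over the repetitions. We keep the $\Order(s)$ candidates with the largest estimates and set all other coordinates to~$0$.

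The main obstacle is the analysis of this step. We must show, with overall probability at least $1-\delta$, two things:
\begin{itemize}
\item every coordinate with $|x_i| > \tfrac{1}{\sqrt{s+1}}\norm{x_{-s}}_2$ is isolated in a majority of repetitions and is decoded correctly there;
\item every output estimate has the $\ell_\infty$ error bound.
\end{itemize}
The median over $R = \Order(\log(s/\delta))$ repetitions, combined with a Chernoff bound, makes the estimate correct for a fixed $i$ with probability $1-\delta/\poly(s)$. We then take a union bound over the at most $\Order(s)$ heavy coordinates and the $\Order(sR)$ candidates. Coordinates that are not output have true value at most the threshold, so zeroing them is fine. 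False candidates either get accurate median estimates by the same argument, or are pruned by the top-$\Order(s)$ selection without harm.

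Bit-decoding correctness for a heavy $i$ requires its magnitude to dominate the noise in every bit-test simultaneously. This needs a slightly stronger per-test noise bound. We obtain it by boosting $B$ by a constant factor and using Markov's inequality on the total noise mass in the bucket, which bounds all bit-tests at once because each bit-test noise is at most the bucket's noise mass.

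The decoding time is $\Order(m)$ plus the median computations, which cost $\Order(sR\cdot R)$. To bring this down to $\Order(m)$, we instead compute each median in time $\Order(R)$ by linear-time selection, giving $\Order(sR^2)$ total. Since $R \le \Order(\log n)$ whenever $\delta \ge 1/\poly(n)$, this stays within $\Order(m)$.
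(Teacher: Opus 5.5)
The gap is in Step~3: your argument does not establish that a heavy coordinate $i$ is decoded correctly. Your decoder reads off all $\log n$ bits of $i$ from the bit-tests of its bucket, so every bit-test must be correct simultaneously. Your justification, Markov on the bucket's noise mass ``because each bit-test noise is at most the bucket's noise mass'', is false.

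Markov controls the $\ell_2^2$-mass $\mu=\sum_{j\neq i,\,h(j)=h(i)}x_j^2$. The noise in bit-test $\ell$, however, is a signed sum $\sum_{j\in S_\ell}\sigma(j)x_j$, and it is bounded by $\sqrt{\mu}$ only in second moment over $\sigma$, not deterministically. It is deterministically bounded by the bucket's $\ell_1$-mass, but $\|x_{-s}\|_1$ can be about $\sqrt{n}\,\|x_{-s}\|_2$, so Markov on that is useless.

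For a heavy $i$ near the threshold, $x_i^2\approx\|x_{-s}\|_2^2/(s+1)$, and $B=cs$, the ratio $\mu/x_i^2$ can be of order $1/c$. Chebyshev then bounds each bit-test's failure probability only by $O(1/c)$, and a union bound over $\log n$ bits needs $c=\Omega(\log n)$. This is not just slack in the analysis. Take $s=1$, $x_1=1$ and all other entries $1/\sqrt{n-1}$. Even with fully random signs, each bit errs with some constant probability $p_c>0$, roughly independently across bits. So index $1$ is decoded in a given repetition with probability about $(1-p_c)^{\log n}\to 0$. With $R=O(\log(s/\delta))$ repetitions it never becomes a candidate, and $x'_1=0$ violates the bound. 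Taking $B=\Theta(s\log n)$, or repeating each bit-test $\Theta(\log\log n)$ times with a per-bit majority, would repair decoding but breaks $m=O(s\log n\log(s/\delta))$.

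The missing ingredient, which the paper uses, is an error-correcting code. Let the bit-tests range over the bits of $\mathrm{Enc}(i)\in\{0,1\}^{\beta\log n}$ for a constant-rate, linear-time decodable code (Spielman's) that corrects an $\alpha$-fraction of errors. Then only the \emph{expected} number of wrong bit-tests needs to be small. Markov shows that fewer than $\alpha\beta\log n$ bits are wrong with constant probability, so each heavy hitter is decoded in each repetition with probability at least $3/4$. By Chernoff it is then decoded in at least $\frac23 R$ repetitions, with probability high enough for a union bound over the $O(s)$ heavy hitters. Your comparison of $|y_{b,\ell}|$ with $|y_b-y_{b,\ell}|$ would work as the per-bit test in place of the paper's AMS-based threshold.

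This fix also removes two secondary problems in your Step~3:
\begin{itemize}
\item Keeping only indices decoded in at least $\frac23R$ repetitions gives a list of size $O(s)$ directly. Your $O(sR^2)$ median cost fits into $O(m)$ only when $R=O(\log n)$, a restriction on $\delta$ that the statement does not have.
\item The paper estimates the listed coordinates with a separately sampled Count-Sketch, independent of the list, so the union bound over the list is immediate. Your candidates are determined by the same hashes and signs as your median estimates, so a plain union bound over ``the $O(sR)$ candidates'' is not justified as written. It can be rescued, since each candidate is fixed by a single repetition and the median tolerates one adversarial sample, but you need to say so.
\end{itemize}
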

\section{Budget Allocation Sketch} \label{sec:budget-alloc}

The goal of this section is to prove the following \cref{thm:budget-alloc} which captures the ``budget allocation sketch'' outlined before. This theorem forms the main technical ingredient for our robustifying reduction in \cref{sec:robustifying}. However, we expect \cref{thm:budget-alloc} to have broader applicability, see for instance \cref{sec:distributed} for an alternative application in a distributed setup.

\begin{theorem}[Budget Allocation Sketch] \label{thm:budget-alloc}
There is a randomized construction of $S \in \mathbb{R}^{s \times n}$ with $s = O(\log^2 n)$, such that given $k \geq 0$ and $SX$ for an unknown $X \in \mathbb{R}^{n \times d}$ with $d \leq n$ we can find in time $O(d \log^4 n)$ for each column $j$ a number $t_j \ge 0$ such that with high probability:
\begin{enumerate} 
\item $\sum_{j \in [d]} t_j = O(k)$,
\item $\frac{\|(Xe_j)_{-t_j}\|_2^2}{t_j+1} \leq \frac{\|X_{-k}\|_F^2}{k+1}$ for each $j \in [d]$.
\end{enumerate}
Furthermore, the matrix $S$ can be sampled in time $O(\log^2 n)$, after which any entry can be computed in time $O(1)$, and it can be stored in $O(\log^2 n)$ space.
\end{theorem}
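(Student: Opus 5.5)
We build $S$ by stacking the tail-estimation sketches of \cref{lem:l2-tail-approx} for all scales $s \in \{0, 2^0, 2^1, \dots, 2^{\lceil\log n\rceil}\}$, each with failure probability $\delta = n^{-c}$. This gives $O(\log n)$ blocks of $O(\log n)$ rows each, so $s = O(\log^2 n)$ rows in total, and sampling time, space and entry access are as claimed.

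From $SX$ we read off, for every column $j$ and every scale $t = 2^i$ (and $t = 0$), a value $v_{j,t}$. By a union bound over all $O(d\log n)$ pairs, with high probability every one of them satisfies
\[
\tfrac{1}{64}\norm{(Xe_j)_{-512t}}_2^2 \le v_{j,t} \le \norm{(Xe_j)_{-t}}_2^2 .
\]
Computing all these values costs $O(d\log^2 n)$ time. We may also replace $v_{j,t}$ by $\min_{t' \le t} v_{j,t'}$ to make it monotone in $t$; this preserves both bounds, since the tail norm is nonincreasing in $t$.

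Step 1 estimates $W \approx \norm{X_{-k}}_F^2$ by the Minimum-Cost Multiple-Choice Knapsack instance with one bundle per column. Bundle $j$ contains the items $t \in \{0, 1, 2, 4, \dots\}$, where item $t$ has cost $v_{j,t}$ and weight $t$, and the capacity is $2k$. We solve it to within a constant factor using the approximation algorithm of \cref{sec:knapsack}, in time $\widetilde O(d\log n)$; we need to ensure this fits in $O(d\log^4 n)$. Let $\mathrm{OPT}$ be the optimum.

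The upper bound comes from choosing $t_j$ as $k_j$ rounded up to a power of two, where $k_j$ is the number of the top-$k$ entries that lie in column $j$. This has weight $\le 2k$ and cost $\le \sum_j \norm{(Xe_j)_{-k_j}}_2^2 = \norm{X_{-k}}_F^2$.

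The lower bound uses any feasible solution $(t_j)$. Zeroing the top $512t_j$ entries of each column removes at most $1024k$ entries in total, so
\[
\mathrm{cost} \ge \tfrac{1}{64}\sum_j \norm{(Xe_j)_{-512t_j}}_2^2 \ge \tfrac{1}{64}\norm{X_{-1024k}}_F^2 .
\]
Hence $W$ is a bicriteria approximation: $\Omega(1)\cdot\norm{X_{-Ck}}_F^2 \le W \le O(1)\cdot\norm{X_{-k}}_F^2$. To get exactly $\norm{X_{-k}}_F^2$ on the right-hand side, we run the knapsack with capacity $2k' $ for $k' = \Theta(k)$ chosen small, say $k/c'$, and then rescale.

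Step 2 sets $t_j$ to the smallest scale $t$ (including $t = 0$) with $v_{j,t}/(t+1) \le \alpha W/(k+1)$ for a suitable constant $\alpha$, and then multiplies it by $512$. Condition 2 follows directly:
\[
\frac{\norm{(Xe_j)_{-512t}}_2^2}{512t+1} \le \frac{64\,v_{j,t}}{t+1} \le \frac{64\alpha W}{k+1} \le \frac{\norm{X_{-k}}_F^2}{k+1},
\]
provided we calibrate $W$ so that $64\alpha W \le \norm{X_{-k}}_F^2$. Some scale always qualifies, because at $t \ge n$ we have $v_{j,t} = 0$.

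Condition 1 is the main obstacle. The difficulty is that $W$ only lower-bounds $\norm{X_{-Ck}}_F^2$, whereas we need a total budget of $O(k)$. We compare against the near-optimal knapsack solution $(t_j^*)$ of cost $\approx W$. For columns with $t_j \le 2t_j^*$ the total budget is at most $2\sum_j t_j^* = O(k)$. For every other column the scale $t_j/2 > t_j^*$ failed the threshold test, so by monotonicity
\[
v_{j,t_j^*} \ge v_{j,t_j/2} > \alpha W \cdot \frac{t_j/2 + 1}{k+1}.
\]
Summing over these columns gives $\sum t_j \le O(k/\alpha)\cdot \sum_j v_{j,t_j^*}/W = O(k)$, since the knapsack cost is $O(W)$. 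The delicate part is choosing the constants consistently: we need $\alpha$ and the inflation $k' $ versus $k$ to make both conditions hold simultaneously. The cleanest route is probably to compute $W$ from an instance at a reduced capacity $k/C$, so that $W \lesssim \norm{X_{-k/C\cdot\Theta(1)}}$ aligns with the guarantee, and to check that the constant-factor slack from the knapsack approximation only changes $O(k)$ by a constant. The total running time is dominated by building the knapsack instance and solving it, which is $O(d\log^4 n)$ given \cref{sec:knapsack}.
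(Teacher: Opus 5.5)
Your core argument is correct, and for Condition~1 it takes a different route from the paper.

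\textbf{The paper's route.} The paper first turns the knapsack value into the bicriteria estimate $\frac{1}{128}\|X_{-1024k}\|_F^2 \le w \le \|X_{-k}\|_F^2$. Then, purely for the analysis, it defines for each column a scale $s_j$. This scale is built from the number $k'_j$ of $1024k$-heavy entries in column $j$ and from that column's share $\|(Xe_j)_{-k'_j}\|_2^2/\|X_{-1024k}\|_F^2$ of the tail. It checks that $s_j$ passes the threshold test, which is exactly where the lower bound on $w$ is used. Hence $t'_j \le s_j$, and it sums the $s_j$.

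\textbf{Your route.} You charge each column whose scale exceeds twice its value $t^*_j$ in an optimal knapsack solution to $v_{j,t^*_j}$. This uses that the predecessor scale failed the test and that $v_{j,\cdot}$ is nonincreasing. The argument needs only feasibility, $\sum_j t^*_j \le 2k$, and the approximation guarantee $\mathrm{OPT} \le 2W$. The lower bound $\mathrm{OPT} \ge \frac{1}{64}\|X_{-1024k}\|_F^2$ and the heavy entries of $X$ never appear, and the final constant comes out smaller. The price is the monotonization of $v_{j,t}$, which you correctly verify preserves both sandwich bounds; the paper's argument works directly with the raw estimates.

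\textbf{The capacity-reduction remarks should be dropped.} They point the wrong way. With capacity $2k/c'$ you only get $\mathrm{OPT} \le \|X_{-k/c'}\|_F^2$. This can exceed $\|X_{-k}\|_F^2$ by an unbounded factor, for example when $X$ has exactly $k$ large entries plus tiny noise. No rescaling then restores $64\alpha W \le \|X_{-k}\|_F^2$, and Condition~2 is lost. Nothing of this kind is needed:
\begin{itemize}
\item keep capacity $2k$;
\item divide the knapsack output by its approximation factor, so that $\frac12\mathrm{OPT} \le W \le \mathrm{OPT} \le \|X_{-k}\|_F^2$;
\item set $\alpha = 1/64$.
\end{itemize}
The obstacle you flag, that $W$ is bounded below only in terms of $\|X_{-Ck}\|_F^2$, never enters your charging argument at all.

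Finally, multiply through rather than divide by $W$. If $W=0$ then $\mathrm{OPT}=0$, so every $v_{j,t^*_j}=0$, and no column can fall into the oversized class.
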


The construction of $S$ and the proof of properties in the last sentence are simple, see Sections~\ref{sec:budget-alloc:sec:construction} and~\ref{sec:budget-alloc:sec:easy}.
For the main property, we first show that from $SX$ we can accurately and efficiently approximate the total noise $\norm{X_{-k}}_F$ in \cref{sec:budget-alloc:sec:approx-noise}. Then in \cref{sec:budget-alloc:sec:budgets} we show how to allocate column budgets, thus finishing the proof of \cref{thm:budget-alloc}.

\subsection{Construction of the Sketch}
\label{sec:budget-alloc:sec:construction}

We write $T_n = \set{0} \cup \set{2^\ell : 0 \leq \ell \leq \floor{\log n}} \cup \set{n}$. This set has the property that for all $0 \leq x \leq n$ (including $x = 0$), there is a number $t \in T_n$ such that $x \leq t \leq \min\set{2x, n}$.

For each $t \in T_n$ we apply \cref{lem:l2-tail-approx} (with error parameter $\delta = n^{-100}$) to construct the matrix~$S^{(t)}$. We construct the matrix $S$ by stacking all matrices $S^{(t)}$, $t \in T_n$, on top of each other. 

\subsection{Easy Properties}
\label{sec:budget-alloc:sec:easy}

Most properties of Theorem~\ref{thm:budget-alloc} can be easily verified: By \cref{lem:l2-tail-approx} each matrix $S^{(t)}$ has $O(\log 1/\delta) = O(\log n)$ rows and can be sampled in time and space $O(\log n)$, so $S$ has $O(\log^2 n)$ rows and can be sampled in time and space $O(\log^2 n)$. By \cref{lem:l2-tail-approx} each entry of $S^{(t)}$ can be computed in time $O(1)$, so the same holds for $S$. It remains to construct the budgets $t_1,\ldots,t_d$.

\subsection{Approximating the \texorpdfstring{\boldmath$\ell_2$}{l2}-Tail} \label{sec:budget-alloc:sec:approx-noise}
We first show that from the sketch $SX$ we can approximate the $\ell_2$-mass of all columns of $X$ after zeroing out the $t$ largest entries, for any $t \in T_n$:

\begin{lemma}[Column-Wise $\ell_2$-Tail Estimation]
\label{lem:col-l2-tail-approx}
Let $X \in \mathbb{R}^{n \times d}$ with $d \le n$. Given $SX$ in time $O(d \log^2 n)$ we can compute numbers $v_{j,t}$, for all $j \in [d]$ and $t \in T_n$, that satisfy with high probability in~$n$:
\begin{equation*}
    \frac{1}{64} \norm{(X e_j)_{-512 t}}_2^2 \leq v_{j, t} \leq \norm{(X e_j)_{-t}}_2^2.
\end{equation*}
\end{lemma}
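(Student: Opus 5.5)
The plan is to apply \cref{lem:l2-tail-approx} column by column. By construction, $S$ is the vertical stack of the matrices $S^{(t)}$ for $t \in T_n$, each sampled with failure parameter $\delta = n^{-100}$. Hence the rows of $SX$ belonging to the block $S^{(t)}$ form exactly the matrix $S^{(t)} X$. Its $j$-th column is $S^{(t)} (X e_j)$, which is precisely the sketch of the vector $x = X e_j \in \Real^n$ required by \cref{lem:l2-tail-approx}. For each pair $(j, t)$ we therefore run the recovery procedure of \cref{lem:l2-tail-approx}(ii) on this column and let $v_{j,t}$ be its output. Whenever the lemma succeeds, we obtain $\frac{1}{64}\norm{(Xe_j)_{-512t}}_2^2 \le v_{j,t} \le \norm{(Xe_j)_{-t}}_2^2$, exactly as claimed. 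There is one edge case: for $t = n$ or $t = 0$, the constraint $n \ge s \ge 0$ of the lemma is still satisfied, and if $512t > n$ we interpret $(Xe_j)_{-512t}$ as the zero vector.

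For the running time, there are $|T_n| = O(\log n)$ values of $t$, and each recovery call takes $O(\log(1/\delta)) = O(\log n)$ time. Reading off the $d$ columns of the $O(\log n)$-row block $S^{(t)}X$ costs $O(d \log n)$ per block. Summing over all $d \cdot |T_n|$ pairs gives $O(d \log^2 n)$ in total.

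For correctness, we must be careful about the probability. The matrix $S^{(t)}$ is sampled once and reused for all $d$ columns, so the guarantee of \cref{lem:l2-tail-approx} holds for each fixed vector $X e_j$ separately, with failure probability at most $n^{-100}$. Since $X$ is fixed and independent of the randomness of $S$, a union bound over the at most $d \cdot |T_n| \le n \cdot O(\log n)$ pairs bounds the total failure probability by $O(n^{-98})$. This is high probability in $n$, and the failure exponent can be made arbitrarily large by choosing $\delta$ smaller.

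The only real subtlety is this independence requirement: the lemma is a for-each guarantee, not a for-all guarantee. This is unproblematic here because $X$ does not depend on $S$. Apart from that, the argument is a direct per-column invocation of the lemma.
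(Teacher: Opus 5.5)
Your proposal is correct and is essentially the paper's proof: read off the block $S^{(t)}X$ from $SX$, apply the $\ell_2$-tail estimation lemma to each column $S^{(t)}(Xe_j)$, and sum the $O(\log n)$ cost over the $d\cdot|T_n|$ pairs. Your explicit union bound over these pairs, together with the remark that $X$ is independent of $S$, spells out the high-probability step that the paper leaves implicit.
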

\begin{proof}
Fix $j \in [d]$ and $t \in T_n$.
By construction of $S$, from $SX$ we can read off $S^{(t)} X$. The $j$-column of this matrix is $S^{(t)} (X e_j)$. Applying \cref{lem:l2-tail-approx} yields a number $v = v_{j,t}$ with the claimed property
\[ \frac 1{64} \norm{(X e_j)_{-512 t}}_2^2 \le v_{j,t} \le \norm{(X e_j)_{-t}}_2^2. \]
Since one application of \cref{lem:l2-tail-approx} takes time $O(\log 1/\delta) = O(\log n)$ and we apply it $d \log n$ times (once for each $j \in [d]$ and $t \in T_n$), the total time is $O(d \log^2 n)$.
\end{proof}

Next, we combine these approximations to estimate $\norm{X_{-k}}_F$:

\begin{lemma}[Frobenius Norm Tail Estimation] \label{lem:frobenius-tail-approx}
Let $X \in \mathbb{R}^{n \times d}$. Given $SX$ and a number $k \ge 0$ in time $O(d \log^4 n)$ we can compute a number $w$ satisfying with high probability in~$n$:
\begin{equation*}
    \frac{1}{128} \norm{X_{-1024k}}_F^2 \leq w \leq \norm{X_{-k}}_F^2.
\end{equation*}
\end{lemma}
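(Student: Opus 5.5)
The plan is to obtain $w$ as the (approximately) optimal value of the Minimum-Cost Multiple-Choice Knapsack instance sketched in the overview. First, I apply \cref{lem:col-l2-tail-approx} to $SX$, which in time $O(d \log^2 n)$ gives numbers $v_{j,t}$ for all $j \in [d]$ and $t \in T_n$ satisfying
\begin{equation*}
    \tfrac{1}{64} \norm{(X e_j)_{-512 t}}_2^2 \leq v_{j,t} \leq \norm{(X e_j)_{-t}}_2^2
\end{equation*}
with high probability. A union bound over the $O(d\log n)$ pairs keeps this simultaneously true with high probability.

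Next I form the knapsack instance. There is one bundle per column $j$. The items of bundle $j$ are indexed by $t \in T_n$, with cost $v_{j,t}$ and weight $t$. The weight threshold is $W = 2k$, and I require exactly one item per bundle; this is always feasible since $0 \in T_n$. The instance has $N = O(d \log n)$ items. Let $\mathrm{OPT}$ be its optimal cost. I invoke the approximation algorithm for Minimum-Cost Multiple-Choice Knapsack from \cref{sec:knapsack} with a constant accuracy, say factor $2$. I assume it runs in time $O(N \log^3 n)$, giving $O(d\log^4 n)$ in total, and that it returns a value $V$ with $\mathrm{OPT} \le V \le 2\,\mathrm{OPT}$. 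I then output $w = V/2$. The main obstacle is exactly this step: the algorithm must be strongly polynomial and fast despite arbitrary real costs, and it must respect the weight bound exactly. I rely on the adaptation of Chan's framework deferred to \cref{sec:knapsack}. If that algorithm instead relaxes the weight bound by a constant factor, I absorb the loss into the constants by starting from a smaller threshold.

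Upper bound $w \le \norm{X_{-k}}_F^2$. Let $k_j$ be the number of the $k$ largest entries of $X$ that lie in column $j$, so $\sum_j k_j \le k$. Choose $t_j \in T_n$ with $k_j \le t_j \le \min\{2k_j, n\}$. This choice is feasible because $\sum_j t_j \le 2k$. Its cost is $\sum_j v_{j,t_j} \le \sum_j \norm{(Xe_j)_{-t_j}}_2^2 \le \sum_j \norm{(Xe_j)_{-k_j}}_2^2$. Zeroing the top $k_j$ entries of each column $j$ removes all the top-$k$ entries of $X$ and possibly more, so this sum is at most $\norm{X_{-k}}_F^2$. Hence $\mathrm{OPT} \le \norm{X_{-k}}_F^2$, and therefore $w = V/2 \le \mathrm{OPT} \le \norm{X_{-k}}_F^2$.

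Lower bound $w \ge \tfrac{1}{128}\norm{X_{-1024k}}_F^2$. Take any feasible selection $(t_j)$ with $\sum_j t_j \le 2k$. Its cost is at least $\tfrac{1}{64}\sum_j \norm{(Xe_j)_{-512 t_j}}_2^2$, where the tail is interpreted as $0$ whenever $512t_j \ge n$. The right-hand side is the squared Frobenius norm of $X$ after zeroing at most $\sum_j 512 t_j \le 1024k$ entries. Any such matrix has squared Frobenius norm at least $\norm{X_{-1024k}}_F^2$, because removing the largest entries minimizes the remaining mass. Hence $\mathrm{OPT} \ge \tfrac{1}{64}\norm{X_{-1024k}}_F^2$, and so $w = V/2 \ge \mathrm{OPT}/2 \ge \tfrac{1}{128}\norm{X_{-1024k}}_F^2$. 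The case $k = 0$ is consistent with this argument: it forces $t_j = 0$ for all $j$.
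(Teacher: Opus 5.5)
Your proposal is correct and follows the paper's proof essentially step for step. It uses the same knapsack instance (weight threshold $W = 2k$), the same feasible choice $k_j \le t_j \le 2k_j$ for the upper bound on $\mathrm{OPT}$, and the same count of at most $\sum_j 512 t_j \le 1024k$ zeroed entries for the lower bound; the constant-factor approximation from \cref{thm:knapsack}, rescaled so that $\mathrm{OPT}/2 \le w \le \mathrm{OPT}$, is also exactly the paper's step, and your hedge about a relaxed weight bound is unnecessary because that theorem approximates $f_{[m]}(W)$ at the exact threshold $W$.
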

\begin{proof}
We first compute the column-wise approximations $v_{j, t}$ for all $j \in [d]$ and $t \in T_n$ by \cref{lem:col-l2-tail-approx}. The idea is that we can read off $w$ as the optimal value of the following Minimum-Cost Multiple-Choice Knapsack instance: View each pair~\makebox{$(j, t) \in [d] \times T_n$} as an \emph{item} with \emph{cost} $v_{j, t}$ and \emph{weight}~$t$. We group these items into $d$ \emph{bundles} of the form $\set{(j, t) : t \in T_n}$. The goal is to select from each bundle $j \in [d]$ exactly \emph{one} item~$(j, t_j)$ such that the total weight of all items is at most~\makebox{$W = 2k$}, and such that the total cost is minimized. Formally,
\begin{alignat*}{2}
    &\text{minimize}\, &&\sum_{j \in [d]} v_{j, t_j} \\
    &\text{subject to}\, &&\sum_{j \in [d]} t_j \leq 2k.
\end{alignat*}
Let $\OPT$ denote the optimal value of this optimization problem. As we will describe in detail in \cref{sec:knapsack}, we can efficiently compute an approximation $w$ satisfying~\makebox{$\frac12 \OPT \leq w \leq \OPT$} (specifically, we apply \cref{thm:knapsack} in the upcoming \cref{sec:knapsack} with parameter $\epsilon = \frac12$).

It remains to prove that this value $w$ is as desired. To this end, the following claim relates $\OPT$ to the $\ell_2$-mass of the light entries in $X$:

\begin{claim} \label{clm:noise-opt}
$\frac{1}{64} \norm{X_{-1024 k}}_2^2 \leq \OPT \leq \norm{X_{-k}}_F^2$.
\end{claim}
\begin{proof}
We start with the upper bound on $\OPT$---that is, we construct a feasible solution to the optimization problem with value at most $\norm{X_{-k}}_F^2$. Let us call the $k$ largest entries in $X$ (in absolute value) the \emph{heavy} entries. Let $k_j$ denote the number of heavy entries in the $j$-th column. We pick $t_j$ to be the smallest value in $T_n$ that exceeds $k_j$; by the construction of $T_n$ we have $k_j \leq t_j \leq 2k_j$. The constraint $\sum_{j \in [d]} t_j \leq \sum_{j \in [d]} 2 k_j \leq 2k$ is clearly satisfied. Therefore, we have that
\begin{align*}
    \OPT
    &\leq \sum_{j \in [d]} v_{j, t_j} \\
    &\leq \sum_{j \in [d]} \norm{(X e_j)_{-t_j}}_2^2 \\
    &\leq \sum_{j \in [d]} \norm{(X e_j)_{-k_j}}_2^2 \\
    &= \norm{X_{-k}}_F^2,
\end{align*}
where in the second step we have applied \cref{lem:col-l2-tail-approx}.

Next we show the lower bound on $\OPT$. To this end, let $(t^*_1,\ldots,t^*_d)$ denote an optimal solution to the optimization problem. By \cref{lem:col-l2-tail-approx} we have that
\begin{align*}
    \OPT
    &= \sum_{j \in [d]} v_{j, t^*_j} \\
    &\geq \tfrac{1}{64} \sum_{j \in [d]} \norm{(X e_j)_{-512 t^*_j}}_2^2.
\end{align*}
This sum can be seen as the Frobenius norm of the matrix $X$ after zeroing out some~\smash{$512 t^*_j$} entries from each column $j$. We zero out~\smash{$\sum_{j \in [d]} 512 t^*_j \leq 1024k$} entries in total, using that $(t^*_1,\ldots,t^*_d)$ is a feasible solution to the optimization problem. Therefore,~\smash{$\OPT \geq \frac{1}{64} \norm{X_{-1024}}_F^2$} as claimed.
\end{proof}

We finally analyze the running time. 
The call to \cref{lem:col-l2-tail-approx} takes time $\Order(d \log^2 n)$. By \cref{thm:knapsack} it takes time $\Order(d \log^4 n)$ to approximate the Minimum-Cost Multiple-Choice Knapsack problem on $d$ bundles and $d \cdot |T_n| = \Order(d \log n) \le \Order(n^2)$ items for $\epsilon = 1/2$.
\end{proof}

\subsection{Allocating Column Budgets} \label{sec:budget-alloc:sec:budgets}
Finally, we assign \emph{budgets} $t_1, \dots, t_d$ to the columns with the following two objectives in mind: On the one hand the budgets must be large enough such that the $\ell_2$-tails per column, $\norm{(X e_j)_{-t_j}}_2$, are sufficiently small. On the other hand, for efficiency reasons the total budget should be bounded by $\Order(k)$. Compared to the overview, we remark that our formal statements here often involve quantities $\frac{1}{t + 1}$ rather than $\frac{1}{t}$; this is necessary to take care of zero-budget columns.

\begin{lemma}[Column Budgets] \label{lem:col-budgets}
Let $X \in \mathbb{R}^{n \times d}$ with $d \le n$. Given $SX$ and a number $k \ge 0$ in time $O(d \log^4 n)$ we can compute budgets $0 \leq t_1, \dots, t_n \leq n$ that satisfy with high probability in~$n$:
\begin{equation} \label{lem:col-budgets:eq:l2-tail}
    \frac{\norm{(X e_j)_{-t_j}}_2^2}{t_j + 1} \leq \frac{\norm{X_{-k}}_F^2}{k + 1},
\end{equation}
for all $j \in [d]$, and
\begin{equation} \label{lem:col-budgets:eq:total}
    \sum_{j \in [d]} t_j \leq 2^{25} (k + 1) = \Order(k).
\end{equation}
\end{lemma}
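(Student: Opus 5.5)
The plan is to first compute the estimate $w$ from \cref{lem:frobenius-tail-approx} with the parameter $k$ itself, so that with high probability $\frac{1}{128}\norm{X_{-1024k}}_F^2 \le w \le \norm{X_{-k}}_F^2$. Alongside it we take the column-wise estimates $v_{j,t}$ from \cref{lem:col-l2-tail-approx}. Both computations cost $O(d\log^4 n)$, and we union bound over their failure events.

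For each column $j$ we then let $\tau_j$ be the smallest $t \in T_n$ with
\begin{equation*}
    \frac{64\, v_{j,t}}{t+1} \le \frac{w}{k+1},
\end{equation*}
and we set $t_j = \min\set{512\tau_j, n}$. Such a $t$ always exists, because $t = n$ gives $v_{j,n} \le \norm{(Xe_j)_{-n}}_2^2 = 0$. Scanning $T_n$ takes $O(\log n)$ time per column, which is within budget.

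Property \eqref{lem:col-budgets:eq:l2-tail} follows from the lower bound in \cref{lem:col-l2-tail-approx}. If $t_j = n$, the tail is $0$ and there is nothing to show. Otherwise
\begin{equation*}
    \norm{(Xe_j)_{-512\tau_j}}_2^2 \le 64\, v_{j,\tau_j} \le (\tau_j+1)\frac{w}{k+1} \le (512\tau_j+1)\frac{\norm{X_{-k}}_F^2}{k+1},
\end{equation*}
where the last step uses $w \le \norm{X_{-k}}_F^2$. Dividing by $t_j + 1 = 512\tau_j+1$ gives the claim.

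The main work, and the step I expect to be delicate, is the total budget bound \eqref{lem:col-budgets:eq:total}. This is a charging argument. Columns with $\tau_j = 0$ contribute $0$. For every other column let $\tau'_j < \tau_j$ be the predecessor of $\tau_j$ in $T_n$, so that $\tau_j \le 2\tau'_j + 1$. By minimality of $\tau_j$ the condition fails at $\tau'_j$. Combining this with the upper bound $v_{j,\tau'_j} \le \norm{(Xe_j)_{-\tau'_j}}_2^2$ gives
\begin{equation*}
    \tau'_j + 1 < \frac{64(k+1)}{w}\, \norm{(Xe_j)_{-\tau'_j}}_2^2 .
\end{equation*}

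Call the $1024k$ largest entries of $X$ heavy, and let $k_j$ be the number of heavy entries in column $j$. We split the columns into two types.
\begin{itemize}
    \item Columns with $\tau'_j < k_j$. Here $\tau'_j + 1 \le k_j$, so these columns contribute at most $\sum_j k_j \le 1024k$ to $\sum_j(\tau'_j+1)$.
    \item Columns with $\tau'_j \ge k_j$. Here $\norm{(Xe_j)_{-\tau'_j}}_2^2 \le \norm{(Xe_j)_{-k_j}}_2^2$, and summing over these columns gives at most $\norm{X_{-1024k}}_F^2 \le 128 w$. Plugging into the displayed inequality, these columns contribute at most $64 \cdot 128 (k+1)$.
\end{itemize}

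Hence $\sum_j (\tau'_j + 1) = O(k+1)$, so $\sum_j \tau_j \le \sum_j (2\tau'_j + 1) = O(k+1)$. Multiplying by $512$ gives $\sum_j t_j = O(k+1)$. Tracking the constants ($512 \cdot 2 \cdot (1024 + 8192)$ and similar) should fit under $2^{25}$. One small point remains: the budgets are indexed by $j \in [d]$. If the statement's $t_1, \dots, t_n$ is read literally, we simply set the unused budgets to $0$.
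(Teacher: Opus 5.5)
Your proposal is correct and follows the paper's proof. The selection rule and the verification of the per-column tail bound are identical to the paper's. Your charging argument for the total budget is the contrapositive of the paper's: you use that the predecessor $\tau'_j$ fails the test, whereas the paper exhibits a value $s_j \in T_n$ of order $\max\{k'_j,\, 2^{13}(k+1)\norm{(Xe_j)_{-k'_j}}_2^2/\norm{X_{-k'}}_F^2\}$ (with $k' = 1024k$) that passes the test, whence $\tau_j \le s_j$.

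The only loose end is that you divide by $w$. If $w = 0$ then $\norm{X_{-1024k}}_F = 0$, so every column of your second type would have $v_{j,\tau'_j} = 0$ and would pass the test at $\tau'_j$. That type is therefore empty and your bound still holds; the paper has the analogous issue when it divides by $\norm{X_{-k'}}_F^2$.
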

\begin{proof}
Compute the approximations $v_{j, t}$ (for~\makebox{$j \in [d]$} and $t \in T_n$) and $w$ by \cref{lem:col-l2-tail-approx,lem:frobenius-tail-approx}. For each $j \in [d]$, let $t_j' \in T_n$ denote the smallest value satisfying that
\begin{equation*}
    \frac{64 v_{j, t_j'}}{t_j' + 1} \leq \frac{w}{k + 1};
\end{equation*}
note that $v_{j, n} = 0$ and thus there certainly is a value $t_j' \in T_n$ satisfying this inequality. Then we pick~\makebox{$t_j = \min\set{512t_j', n}$}. In what follows we argue that properties~\eqref{lem:col-budgets:eq:l2-tail} and~\eqref{lem:col-budgets:eq:total} are satisfied.

Property~\eqref{lem:col-budgets:eq:l2-tail} follows from a simple calculation, using that $v_{j, \ell}$ and $w$ are accurate approximations according to \cref{lem:col-l2-tail-approx,lem:frobenius-tail-approx}:
\begin{align*}
    \frac{\norm{(X e_j)_{-t_j}}_2^2}{t_j+1}
    &\leq \frac{\norm{(X e_j)_{-512 t_j'}}_2^2}{t_j'+1} \\
    &\leq \frac{64 v_{j, t_j'}}{t_j'+1} \\
    &\leq \frac{w}{k+1} \\
    &\leq \frac{\norm{X_{-k}}_F^2}{k+1}.
\end{align*}

For Property~\eqref{lem:col-budgets:eq:total} we define some auxiliary quantities. Let $J \subseteq [d]$ denote the set of indices~$j$ satisfying that $t_j \neq 0$. Note that in the sum \smash{$\sum_{j \in [d]} t_j$} we can ignore all terms with $j \not\in J$. Let~\makebox{$k' = 1024 k$}; we call the $k'$ largest entries in $X$ the \emph{$k'$-heavy} entries. Let $k'_j$ denote the number of $k'$-heavy entries in the $j$-th column of $X$. Then, for each $j \in J$, let $s_j \in T_n$ be the smallest value in $T_n$ which is at least
\begin{equation*}
    \max\set*{k'_j,\, 2^{13} (k+1) \cdot \frac{\norm{(X e_j)_{-k'_j}}_2^2}{\norm{X_{-k'}}_F^2}}.
\end{equation*}
We emphasize that it is not our intention to compute $s_j$ (as we have no way of learning $k'_j$); we merely define $s_j$ for the analysis. We observe that $s_j > 0$ for all $j \in J$, as otherwise we would have~\makebox{$k_j' = 0$} and $\norm{X e_j}_2 = \norm{(X e_j)_{-k_j'}}_2 = 0$, and we would thus have selected $t_j = 0$. Therefore, and by the definition of $T_n$, it follows that
\begin{equation*}
    \max\set*{k'_j,\, 2^{13} (k+1) \cdot \frac{\norm{(X e_j)_{-k'_j}}_2^2}{\norm{X_{-k'}}_F^2}} \leq s_j \leq 2 \cdot \max\set*{k'_j,\, 2^{13} (k+1) \cdot \frac{\norm{(X e_j)_{-k'_j}}_2^2}{\norm{X_{-k'}}_F^2}}.
\end{equation*}
Next, we argue that $t_j' \leq s_j$. Consider that
\begin{align*}
    \frac{64 v_{j, s_j}}{s_j + 1}
    &\leq \frac{64 \cdot \norm{(X e_j)_{-s_j}}_2^2}{s_j} \\
    &\leq \frac{64 \cdot \norm{(X e_j)_{-s_j}}_2^2 \cdot \norm{X_{-k'}}_F^2}{2^{13} (k+1) \cdot \norm{(X e_j)_{-k'_j}}_2^2} \\
    &\leq \frac{\norm{X_{-k'}}_F^2}{128 (k+1)} \\
    &\leq \frac{w}{k+1},
\end{align*}
where have again applied \cref{lem:col-l2-tail-approx,lem:frobenius-tail-approx}. Consequently, from the minimality of the values $t_j'$ it indeed follows that $t_j' \leq s_j$. Finally, we have that
\begin{align*}
    \sum_{j \in [d]} t_j
    &= \sum_{j \in J} t_j \\
    &\leq 512 \cdot \sum_{j \in J} t_j' \\
    &\leq 512 \cdot \sum_{j \in J} s_j \\
    &\leq 1024 \cdot \sum_{j \in J} \max\set*{k'_j, 2^{13} (k+1) \cdot \frac{\norm{(X e_j)_{-k'_j}}_2^2}{\norm{X_{-k'}}_F^2}} \\
    &\leq 1024 \cdot \sum_{j \in J} \left( k'_j + 2^{13} (k+1) \cdot \frac{\norm{(X e_j)_{-k'_j}}_2^2}{\norm{X_{-k'}}_F^2} \right) \\
    &\leq 1024 \cdot (2k' + 2^{14} (k+1)) \\
    &\leq 1024 \cdot (2^{11} k + 2^{14} (k+1)) \\
    &\leq 2^{25} (k + 1),
\end{align*}
proving Property~\eqref{lem:col-budgets:eq:total}.

Finally, consider the running time. The dominant contribution is the calls to \cref{lem:col-l2-tail-approx,lem:frobenius-tail-approx} in time \smash{$\Order(d \log^4 n)$}. Then, having access to~$v_{j, t}$ and $w$, we can easily compute the budgets~$t_1, \dots, t_n$ in time $\Order(d \log n)$.
\end{proof}

This completes the proof of \cref{thm:budget-alloc}, by combining \cref{sec:budget-alloc:sec:easy} and \cref{lem:col-budgets}. Due to space constraints we moved the proof of our Main \cref{thm:robustifying} to \cref{sec:robustifying}.
\section{Robustifying Reduction} \label{sec:robustifying}
The goal of this section is to prove our main theorem:

\thmrobustifying*

\begin{proof}
The first step is to compute the column budgets $s_1, \dots, s_n$ by \cref{thm:budget-alloc}. Specifically, we apply \cref{thm:budget-alloc} for the matrix $X = A B$, and compute the product $S X$ by first multiplying~$S$ with~$A$, and multiplying the resulting $O(\log^2 n) \times n$ matrix with $B$. Let $t_1, \dots, t_n$ denote the resulting column budgets satisfying the two properties
\medskip
\begin{enumerate} 
\item $\sum_{j \in [d]} t_j = O(k)$,
\item $\frac{\|(Xe_j)_{-t_j}\|_2^2}{t_j+1} \leq \frac{\|X_{-k}\|_F^2}{k+1}$ for each $j \in [d]$.
\end{enumerate}
\medskip
We may round up each $t_j$ to the smallest value in $T_n$ while maintaining the two properties.

According to these budgets we partition the columns $[n]$ into blocks $J_s = \set{ j : t_j = s }$ for $s \in T_n$. Let $B^{(s)}$ denote the matrix $B$ restricted to the columns in~$J_s$. In this terminology, our remaining goal is to compute the individual matrix products $A B^{(s)}$ for each $s \in T_n$.

Fix any $s \in T_n$ and let $H^{(s)}$ be the matrix obtained from \cref{lem:heavy-hitters} (with~\makebox{$\delta = n^{-100}$}). We compute the matrix product~$A^{(s)} := H^{(s)} A$ in time $\Order(\IN \log^2 n)$ exploiting that $H^{(s)}$ has column sparsity $\Order(\log^2 n)$ by \cref{lem:heavy-hitters}. Moreover, $A^{(s)}$ has size~\makebox{$\Order(s \log^2 n) \times n$} and is~$\Order(\IN \log^2 n)$-sparse. Therefore, for some constant $\gamma > 0$ to be determined later, we can partition $A^{(s)}$ row-wise into~\makebox{$R = \Order(\log^2 n \cdot \gamma^{-1})$} submatrices~\smash{$A^{(s)}_1, \dots, A^{(s)}_R$} each of which has size $\gamma s \times n$ and is $\IN$-sparse. Indeed, first split $A^{(s)}$ into chunks of $\gamma s$ rows, and whenever necessary subdivide a chunk into smaller subchunks with sparsity at most $\IN$; it is easy to see that the total number of chunks and subchunks does not exceed $\Order(\log^2 n \cdot \gamma^{-1})$. We then compute the matrix products~\smash{$A^{(s)}_1 B^{(s)}, \dots, A^{(s)}_R B^{(s)}$} using oracle calls to the efficient sparse matrix multiplication algorithm. Combining the rows of these resulting matrices appropriately, we obtain $A^{(s)} B^{(s)}$. Finally, we pass the column vectors of~\smash{$A^{(s)} B^{(s)} = H^{(s)} (A B^{(s)})$} to \cref{lem:heavy-hitters} and obtain, for each~\makebox{$j \in J_s$}, a vector $c_j$ such that
\begin{equation*}
    \norm{A b_j - c_j}_\infty \leq \frac{1}{\sqrt{s + 1}} \norm{(A b_j)_{-s}}_2.
\end{equation*}
After all $\Order(\log n)$ iterations, let $C$ denote the $n \times n$ matrix with column vectors $c_1, \dots, c_n$. We return this matrix $C$.

The correctness follows from a simple calculation:
\begin{align*}
    \norm{A B - C}_\infty
    &= \max_{j \in [n]} \norm{A b_j - c_j}_\infty \\
    &\leq \max_{j \in [n]} \frac{1}{\sqrt{t_j + 1}} \norm{(A b_j)_{-t_j}}_2 \\
    &\leq \frac{1}{\sqrt{k+1}} \norm{(A B)_{-k}}_F.
\end{align*}
In the last step we have applied Property~\eqref{lem:col-budgets:eq:l2-tail} from \cref{lem:col-budgets}.

We finally analyze the running time. The construction of the sketch matrix $S$ takes polylogarithmic time, and the multiplications $S A$ and $(S A) B$ take time $O(\IN \log^2 n)$ (since $S$ has $O(\log^2 n)$ rows). Computing the budgets takes time $O(n \log^4 n)$ by \cref{thm:budget-alloc}. For any fixed $s \in T_n$ the time to compute the matrix $H^{(s)}$ is negligible. Computing the matrix product~\smash{$A^{(s)} = H^{(s)} A$} by the naive algorithm takes time $\Order(\IN \log^2 n)$ given that $H^{(s)}$ is $\Order(\log^2 n)$-column sparse. Splitting $A^{(s)}$ into its submatrices runs in the same time $\Order(\IN \log^2 n)$. To bound the time to compute any matrix product~\smash{$A^{(s)}_r B^{(s)}$}, observe that this is a matrix of size $s \times |J_s|$. Since~\smash{$\sum_{j \in [n]} t_j \leq 8 c_2 (c_1^2 + c_2) k$} by Property~\eqref{lem:col-budgets:eq:total} of \cref{lem:col-budgets}, we must have~\smash{$|J_s| \leq 8 c_2 (c_1^2 + c_2) k / s$}. In particular, the matrix~\smash{$A^{(s)}_r B^{(s)}$} has at most~\smash{$\gamma s \cdot 8 c_2 (c_1^2 + c_2) k / s$} nonzero entries, and thus, by choosing the constant $\gamma = 1 / (8 c_2 (c_1^2 + c_2))$, the number of entries becomes $\leq k$. Since both matrices~\smash{$A^{(s)}_r$} and~\smash{$B^{(s)}$} further have size at most~\makebox{$n \times n$} and at most $\IN$ nonzero entries, the running time of each product is $\Order(T(n, \IN, k))$. Recall that we globally iterate over $\Order(\log^2 n)$ choices for $r$ and $|T_n| = \Order(\log n)$ choices for $s$, so the total time is 
\begin{equation*}
    \Order(n \log^4 n + \IN \log^3 n + \log^3 n \cdot T(n, \IN, k)) = \Order(n \log^4 n + \log^3 n \cdot T(n, \IN, k)),
\end{equation*}
as claimed.
\end{proof}

\subsection{Consequences of the Reduction}
We finally show how the three corollaries follow from our main reduction. For \cref{cor:in-plus-k,cor:almost-optimal} we plug in the Abboud--Bringmann--Fischer--Künnemann algorithm~\cite{AbboudBFK24} into \cref{thm:robustifying} (with a different upper bounds on the running time), whereas for \cref{cor:faster-than-pagh} it suffices to use rectangular matrix multiplication. In fact, even for \cref{cor:in-plus-k,cor:almost-optimal} it is not necessary to use the full power of~\cite{AbboudBFK24}, and we could instead directly apply the heavy/light algorithm á la Yuster--Zwick~\cite{YusterZ05}, generalized by Kaplan--Sharir--Verbin~\cite{KaplanSV06}.

\begin{proof}[Proof of \cref{cor:in-plus-k}]
\cite[Theorem~1.4]{AbboudBFK24} proves that $T(n, \IN, \OUT) = \Order((\IN + \OUT)^{1.346})$. From this, the claim is immediate by \cref{thm:robustifying}.
\end{proof}

\begin{proof}[Proof of \cref{cor:faster-than-pagh}]
Let $\omega(\cdot, \cdot, \cdot)$ denote the exponent of algebraic rectangular matrix multiplication (i.e., such that $\MM(n^a, n^b, n^c) = \Order(n^{\omega(a, b, c) + \epsilon})$ for any $\epsilon > 0$). It is well-known that $\omega(\cdot, \cdot, \cdot)$ is a convex function~\cite{LottiR83,Blaser13}.

The running time of the Abboud--Bringmann--Fischer--Künnemann algorithm~\cite{AbboudBFK24} can be bounded by
\begin{equation*}
    \widetilde\Order\parens*{\IN + \max_{\substack{x, z \leq n\\xz \leq \OUT}} \MM(x, n, z)}.
\end{equation*}
Let $\gamma$ be the constant such that $\OUT = n^{\gamma + \order(1)}$. Then we can bound this running time in terms of $\omega(\cdot, \cdot, \cdot)$ by
\begin{equation*}
    \widetilde\Order\parens*{\IN + \max_{\substack{0 \leq a, c \leq 1}} n^{\omega(a, 1, c) + \epsilon}},
\end{equation*}
for any $\epsilon > 0$. From the convexity of $\omega(\cdot, \cdot, \cdot)$ it follows that the maximum is obtained in one of the corner cases $\omega(1, 1, \gamma - 1)$ or $\omega(\gamma - 1, 1, 1)$. It is further known that $\omega(\cdot, \cdot, \cdot)$ is invariant under permuting its three arguments~\cite{LottiR83,Blaser13}, and thus the running time is $\widetilde\Order(\IN + n^{\omega(1, 1, \gamma - 1) + \epsilon}) = \widetilde\Order(\IN + \MM(n, n, \ceil{\frac{k}{n}}) \cdot n^\epsilon)$.

Given this bound for sparse matrix multiplication, the statement follows immediately from \cref{thm:robustifying}.
\end{proof}

\begin{proof}[Proof of \cref{cor:almost-optimal}]
A simple combination of~\cite[Theorem~1.7]{AbboudBFK24} with~\cite[Lemma~4.8]{AbboudBFK24} and the current state-of-the-art bounds for the dual matrix multiplication constant~$\alpha \geq 0.31389$ yields that $T(n, \IN, \OUT) = \Order(\OUT^{1+\epsilon})$, for any $\epsilon > 0$, provided that~\smash{$\OUT \geq \IN^{1.76110} \geq (\IN)^{1 + \frac{1}{1+\alpha}}$}. The statement now follows immediately from our main reduction in \cref{thm:robustifying}.
\end{proof}
\section{Minimum-Cost Multiple-Choice Knapsack} \label{sec:knapsack}
In this section we design an efficient approximation scheme for the \emph{Minimum-Cost Multiple-Choice Knapsack} problem; see the following \cref{thm:knapsack}. For this problem the input consists of $n$ items with \emph{costs}~$c_1, \dots, c_n \geq 0$ and \emph{weights} $w_1, \dots, w_n \geq 0$ along with a partitioning of~$[n]$ into \emph{bundles}~$S_1, \dots, S_m$ and a weight threshold $W \geq 0$. The goal is to select a subset of items~$I \subseteq [n]$, exactly one from each bundle, such that the total weight is at most $\sum_{i \in I} w_i \leq W$ and such that the total cost $\sum_{i \in I} c_i$ is minimized. Equivalently expressed as an integer program with indicator variables $x_1, \dots, x_n \in \set{0, 1}$, the task is to:
\begin{alignat*}{3}
    &\text{minimize}\, &&\sum_{i \in [n]} c_i x_i, \\
    &\text{subject to}\, &&\sum_{i \in [n]} w_i x_i \leq W, \\
    & &&\sum_{i \in S_j} x_i = 1 &&\forall j \in [m].
\end{alignat*}

\begin{theorem} \label{thm:knapsack}
The Minimum-Cost Multiple-Choice Knapsack problem can be $(1+\epsilon)$-approximated in time $\Order(n \log n + m \epsilon^{-2} \log^4 (n / \epsilon)) \leq \Order(n \epsilon^{-2} \log^4 (n / \epsilon))$.
\end{theorem}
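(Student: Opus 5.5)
Before running Chan's framework I will do two reductions. First, in each bundle I discard every item that is dominated, meaning some other item in the same bundle has both smaller or equal weight and smaller or equal cost. This needs one sort and one linear scan, which gives the $O(n\log n)$ term. Each remaining bundle is a staircase: as weight increases, cost strictly decreases. Second, I subtract the minimum-weight item from each bundle, so that this item becomes weight $0$. This shifts $W$ and every cost by a fixed amount. Costs are not translation-invariant under multiplicative approximation, so I will instead keep the original (nonnegative) costs and only normalize the weights. The task is then to compute, for the function $f(W')=\min\{\text{cost}:\text{weight}\le W'\}$, an approximation $\tilde f$ with $f\le \tilde f\le(1+\epsilon)f$ at $W'=W$. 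I also need to recover the selected items, or at least the value, since that is all \cref{lem:frobenius-tail-approx} uses.

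Next I follow Chan~\cite{Chan18} but in the min-plus dual form. Each bundle $j$ gives a monotone nonincreasing step function $f_j$ mapping weight to cost, and $f$ is the $(\min,+)$-convolution of all the $f_j$. I will round costs geometrically to powers of $(1+\epsilon/\log m)$ and replace each function by the $O(\epsilon^{-1}\log(n/\epsilon))$ breakpoints at which its value drops by such a factor. Costs have no a priori range, so I first compute a crude estimate $L\le\OPT\le mL'$ using the best single item per bundle and a greedy choice. This lets me truncate costs below $\epsilon L/m$ to $0$ and costs above $\OPT$-scale to $\infty$, which leaves $O(\epsilon^{-1}\log(m/\epsilon))$ relevant cost levels. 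I then merge the $m$ functions in a balanced binary tree of depth $\log m$. At each node I convolve two monotone step functions, each with $O(\epsilon^{-1}\log)$ levels, and re-round. Rounding at each of the $\log m$ levels contributes a factor $(1+\epsilon/\log m)$, for $(1+\epsilon)$ overall.

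Each node convolution is a $(\min,+)$-convolution of two monotone step functions of complexity $p=O(\epsilon^{-1}\log(n/\epsilon))$. I will do this in $O(p^2)$ time by trying all pairs of breakpoints and taking the lower envelope. Chan uses the same naive bound in his simplest variant. Summed over the $O(m)$ tree nodes this gives $O(m\epsilon^{-2}\log^2(n/\epsilon))$. The additional $\log^2$ factors come from the finer per-level precision $\epsilon/\log m$, which inflates $p$ by one $\log m$ factor. The resulting bound is within the claimed $O(m\epsilon^{-2}\log^4(n/\epsilon))$.

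The step I expect to be the main obstacle is controlling the error in this minimization setting. In Chan's maximization version, rounding profits down only loses a relative factor. Here, rounding costs up and dropping low-cost breakpoints must keep $\tilde f(W)\le(1+\epsilon)f(W)$, and additive truncation near zero must be charged against $\OPT$. This is why I need the crude $\OPT$ estimate, and I will verify that the truncation threshold $\epsilon L/m$, summed over at most $m$ bundles, adds at most $\epsilon\OPT$. If no good crude estimate is available, I will instead guess $\OPT$ among powers of two, as in Chan's framework, and use the subset-sum structure to bound the number of guesses by $O(\log n)$. Finally, I will confirm that the property "exactly one item per bundle" is preserved, since each $f_j$ is finite only on feasible choices.
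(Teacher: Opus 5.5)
\textbf{Gap: the crude bracket for $\OPT$.} Everything except one step matches the paper: the per-bundle staircase functions, $\stair_\delta$-rounding with $\delta=\Theta(\epsilon/\log m)$, the balanced merge tree, and naive $\Order(p^2)$ $(\min,+)$-convolutions. (Your first count $p=\Order(\epsilon^{-1}\log(n/\epsilon))$ should carry the extra $\log m$ from the start, but you fix this later.) The missing step is the one you flag yourself: a \emph{strongly polynomial} crude bracket for $\OPT$. Without it, $\max_{<\infty}/\min_{>0}$ of the cost functions is unbounded. Then $\stair_\delta$ does not have complexity $\Order(\delta^{-1}\log(n/\epsilon))$, and the time bound fails.

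\textbf{Why your two routes do not work.} ``Best single item per bundle and a greedy choice'' does not give a polynomially tight bracket. The lower bound $\sum_j \min_{i\in S_j} c_i$ can be $0$, or arbitrarily small, while $\OPT$ is large, because the weight constraint can force expensive items. For example, take two bundles, each containing $(c,w)=(0,1)$ and $(c,w)=(M,0)$, with $W=1$: the bound is $0$ but $\OPT=M$. Natural greedy feasible solutions, such as the minimum-weight item per bundle, can likewise be arbitrarily worse than $\OPT$, so $L'/L$ is unbounded. The fallback does not help either. Guessing $\OPT$ among powers of two needs $\log(\text{cost range})$ guesses, and no subset-sum structure bounds this by $\Order(\log n)$. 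If you instead appeal to $\Order(\log n)$-bit inputs, the range is already polynomial and guessing is pointless. But then the algorithm is no longer strongly polynomial, which the paper explicitly wants (see the footnote in the introduction), and iterating over guesses costs an extra logarithmic factor beyond the stated bound.

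\textbf{The fix (the paper's \cref{lem:knapsack-crude}).} Solve the bottleneck relaxation exactly: let $\OPT'$ be the minimum, over feasible selections, of the \emph{maximum} selected cost. Then $\OPT'\le\OPT\le m\,\OPT'$. It is computable in $\Order(n\log n)$:
\begin{enumerate}
\item Sort items by decreasing cost and delete them one at a time.
\item Maintain the minimum achievable total weight $\sum_j \min_{i\in S_j} w_i$ over surviving items, using precomputed suffix minima in each bundle.
\item $\OPT'$ is the cost of the first item whose deletion pushes this sum above $W$.
\end{enumerate}
With $t=\OPT'$, delete items of cost $>nt$, \emph{raise} costs below $\epsilon t/n$ to $\epsilon t/n$ (additive error at most $\epsilon\OPT$), and rescale, so the range becomes $\poly(n/\epsilon)$. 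Raising rather than zeroing small costs also keeps the one-sided guarantee $f\le\widetilde f$ that you state; truncating them to $0$, as you propose, breaks it.
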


To prove this theorem we borrow heavily from Chan's framework for 0-1 Knapsack~\cite{Chan18}. While the dependence on $\epsilon$ is a secondary concern to us, we have to spend a little extra effort in order to cope with the Minimum-Cost variant of the problem. This extra difficulty is mainly due to the fact that we cannot trivially assume that all input numbers are from a $\poly(n, \epsilon^{-1})$-size range. Instead, we employ the following lemma to precompute a very crude approximation:

\begin{lemma} \label{lem:knapsack-crude}
The Minimum-Cost Multiple-Choice Knapsack problem can be $n$-approximated in time $\Order(n \log n)$.
\end{lemma}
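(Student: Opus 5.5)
The plan is to reduce to a \emph{bottleneck} version of the problem and solve that greedily. Write $\OPT$ for the optimal cost and assume the instance is feasible, i.e., $\sum_{j} \min_{i \in S_j} w_i \leq W$; otherwise we report infeasibility. Sort all $n$ items by cost in $\Order(n \log n)$ time. For a threshold $\tau$, call an item \emph{admissible} if $c_i \leq \tau$. Call $\tau$ \emph{good} if every bundle contains an admissible item and $\sum_j \min\set{w_i : i \in S_j,\ c_i \leq \tau} \leq W$. Now let $\tau^*$ be the smallest good threshold among the item costs. We will output $n \cdot \tau^*$, or alternatively the cost of the feasible solution that $\tau^*$ witnesses.

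\textbf{Step 1 (sandwich).} Take an optimal solution and let $c_{\max}$ be the largest cost it uses. Every item it picks has cost at most $c_{\max}$, so $c_{\max}$ is good. Hence $\tau^* \leq c_{\max} \leq \OPT$.

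Conversely, the solution that picks, in each bundle, a minimum-weight admissible item for $\tau^*$ is feasible. Its cost is at most $m \tau^* \leq n \tau^*$. Therefore $\OPT \leq n \tau^* \leq n \cdot \OPT$, which is the claimed $n$-approximation.

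\textbf{Step 2 (computing $\tau^*$ in $\Order(n \log n)$).} Goodness is monotone in $\tau$: raising $\tau$ only adds admissible items, which can only lower each bundle's minimum weight. So we can find $\tau^*$ in a single sweep over the items in increasing cost order.

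During the sweep we maintain, for each bundle, its current minimum admissible weight. We also maintain the number of bundles that still have no admissible item, and the sum of the current minima. Processing one item updates one bundle's minimum in $\Order(1)$ time. After each item we check whether all bundles are covered and the sum is at most $W$, and we stop at the first item where this holds.

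If the sweep finishes without this happening, the instance is infeasible. The sweep costs $\Order(n)$ after sorting, for $\Order(n \log n)$ in total.

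\textbf{Main obstacle.} The main point to check is the edge case $\OPT = 0$. It is covered because $\tau^*$ is the smallest good value among the item costs: if $\OPT = 0$, then $c_{\max} = 0$, so $\tau^* = 0$ and the output is exactly $0$.

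The only other subtlety is keeping the sweep's maintained sum exact under ties in cost. This is handled by checking goodness only after all items of equal cost have been processed; checking after every single item also works, since that only finds a good threshold earlier, which is still valid by monotonicity.
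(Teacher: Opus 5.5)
Your proof is correct and is essentially the paper's argument: the paper likewise relaxes to the bottleneck objective of minimizing $\max_i c_i x_i$, uses $\OPT' \le \OPT \le n \cdot \OPT'$ (your $\tau^*$ is exactly $\OPT'$), and computes this value by a monotone sweep that maintains the sum of the per-bundle minimum weights. The only difference is the sweep direction. The paper deletes items in decreasing order of cost and therefore has to precompute right-to-left minima in each bundle, whereas your insertion sweep in increasing order of cost maintains each bundle's minimum in $O(1)$ per item, which is a slightly simpler implementation.
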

\begin{proof}
In order to compute an $n$-approximation, it suffices to consider the relaxed problem with the objective to minimize $\max_i c_i x_i$ rather than $\sum_i c_i x_i$. Then, denoting the optimal value of this alternative problem by $\OPT'$, we have that $\OPT' \leq \OPT \leq n \cdot \OPT'$. To compute $\OPT'$, let us sort and reorder the items such that $c_1 \geq \dots \geq c_n$. Then we successively \emph{delete} the items $1, \dots, n$. In each step we test whether the remaining instance is still feasible (i.e., in the $k$-th step we test whether there is a subset of $\set{k, \dots, n}$ that satisfies the weight and multiple-choice constraints). When $k^*$ is the last feasible iteration it is not hard to see that $\OPT' = c_{k^*}$.

To implement this strategy efficiently we precompute for each bundle $S_j$ its right-to-left-minima, more precisely we compute the item of minimum weight in $S_j \cap \{k,\ldots,n\}$ for each $k$. Note that these are at most $|S_j|$ distinct items, and we can compute them in sorted order by $k$ via a linear-time scan over $S_j$. 
Following the strategy of iteratively deleting the items $1, \dots, n$, we maintain the smallest possible weight of all solutions:
\begin{equation*}
    w = \sum_{j \in [m]} \min_{i \in S_j} w_i.
\end{equation*}
We can initially sort the items, precompute right-to-left-minima and compute~$w$ in time $\Order(n \log n)$.
Then in each step, when we delete some item~$k$, we update the minimum weight of any item in~$S_j$, by advancing in its sequence of right-to-left-minima in case~$k$ is a right-to-left-minimum in~$S_j$. We update~$w$ accordingly (by subtracting~$w_k$ and adding the new minimum weight in the bundle~$S_j$ if~$k$ happened to be a right-to-left-minimum in~$S_j$). Note that at any point in time we can easily check whether the current instance is feasible by testing whether $w \leq W$. The total time is $\Order(n \log n)$ as claimed.
\end{proof}

By running \cref{lem:knapsack-crude} in a preprocessing step we can compute a number $t \leq \OPT \leq n \cdot t$. Afterwards, we can delete all items with cost more than $n \cdot t$ and replace the cost of any item with cost less than $\frac{\epsilon}{n} t$ by $\frac{\epsilon}{n} t$; the latter transformation increases the total cost of any solution additively by at most $\epsilon t \leq \epsilon \OPT$. Then, by rescaling all costs by $\frac{\epsilon}{n} t$ we can assume that all costs are in the range $\set{0} \cup [1, n^2 / \epsilon]$. Hence, this preprocessing ensures that all input numbers are from a $\poly(n, \epsilon^{-1})$-size range.

\subparagraph*{Preliminaries on Monotone Functions}
Before we continue with the actual algorithm, we take a step back and introduce some terminology as in~\cite{Chan18}. Let $f : \Realnneg \to \Realnneg \cup \set{\infty}$ be a nonincreasing function. We say that $f$ has \emph{complexity} $s$ if it takes $s$ distinct values. In particular, since $f$ is nonincreasing this implies that $f$ is a \emph{staircase function} with $s$ steps. Note that any staircase function can be concisely stored in space $\Order(s)$ by storing for each of the at most $s$ different function values $v$ the point $\min\set{x : f(x) = v}$.

Now consider two nonincreasing functions $f, g : \Realnneg \to \Realnneg \cup \set{\infty}$. Their \emph{$(\min, +)$-convolution} is the function $f \oplus g$ defined by
\begin{equation*}
    (f \oplus g)(z) = \min_{\substack{x, y \in \Realnneg\\x+y=z}} (f(x) + g(y));
\end{equation*}
it can easily be verified that $f \oplus g$ is also nonincreasing. Moreover, if the complexities of $f$ and~$g$ are at most $s$ and $t$, respectively, then the complexity of~$f \oplus g$ is at most $s t$, and we can also compute~$f \oplus g$ in time $\Order(s t)$.

Finally, we say that a function $\widetilde f$ is an \emph{$(1+\epsilon)$-approximation} of another function $f$ if for all points $x \in \Realnneg$ we have $f(x) \leq \widetilde f(x) \leq (1 + \epsilon) f(x)$. Let $\stair_\epsilon(f)$ be the function in which each nonzero non-$\infty$ value $f(x)$ is rounded up to the smallest power of $1 + \epsilon$. By definition, $\stair_\epsilon(f)$ is an $(1+\epsilon)$-approximation of $f$ with complexity at most
\begin{equation*}
    \Order\parens*{\log_{1+\epsilon} \frac{\max_{< \infty}(f)}{\min_{> 0}(f)}} = \Order\parens*{\epsilon^{-1} \log \frac{\max_{< \infty}(f)}{\min_{> 0}(f)}},
\end{equation*}
where we write $\max_{< \infty}(f)$ for the largest non-$\infty$ function value and $\min_{> 0}(f)$ for the smallest positive function value of $f$.

\subparagraph*{A Divide-and-Conquer Algorithm}
For a subset of bundles $J \subseteq [m]$ we define the function $f_J : \Realnneg \to \Realnneg \cup \set{\infty}$ as follows. Let $f(w)$ denote the minimum cost among all solutions with weight at most $w$ that select exactly one item from each bundle in $J$ (and no item from the other bundles). We set $f(w) = \infty$ if there is no such solution.

\begin{observation} \label{obs:weight-fn}
$f_J$ is nonincreasing and $\OPT = f_{[m]}(W)$.
\end{observation}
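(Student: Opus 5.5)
We need to prove Observation: $f_J$ is nonincreasing and $\OPT = f_{[m]}(W)$. Both claims come straight from the definition of $f_J$.

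For monotonicity, fix $J \subseteq [m]$ and two weights $0 \le w \le w'$. Every selection of exactly one item from each bundle in $J$ with total weight at most $w$ also has total weight at most $w'$. So the set of feasible selections defining $f_J(w)$ is contained in the set defining $f_J(w')$. A minimum over a larger set is no larger, hence $f_J(w') \le f_J(w)$. This uses the convention that the minimum over the empty set is $\infty$, so it also covers the infeasible case. Nonnegativity of the values is immediate since all costs are nonnegative, so $f_J$ indeed maps $\Realnneg \to \Realnneg \cup \set{\infty}$.

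For the second claim, take $J = [m]$. The solutions counted in $f_{[m]}(W)$ are exactly the sets $I \subseteq [n]$ that contain exactly one item from each bundle $S_1, \dots, S_m$ and satisfy $\sum_{i \in I} w_i \le W$. Since the bundles partition $[n]$, "no item from other bundles" imposes no extra constraint here. These are precisely the feasible solutions of the integer program, and both quantities minimize the same objective $\sum_{i\in I} c_i$. Hence $\OPT = f_{[m]}(W)$, where both sides equal $\infty$ if the instance is infeasible.

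Nothing here is an obstacle; the only point needing care is the $\infty$ convention for infeasible selections, which must be applied consistently in both parts.
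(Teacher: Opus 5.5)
Your proof is correct and is the direct unfolding of the definition of $f_J$ that the paper leaves implicit: it states this as an observation without proof. The two ingredients are nested feasible sets for monotonicity, and, for $\mathrm{OPT} = f_{[m]}(W)$, the fact that the feasible solutions of the integer program are exactly the selections counted by $f_{[m]}(W)$ because the bundles partition the items.
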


In light of this observation, in order to solve the Minimum-Cost Multiple-Choice Knapsack problem our goal is to compute the function $f_{[m]}$. In fact, as we are only shooting for a $(1 + \epsilon)$-approximate solution our aim is to compute a $(1 + \epsilon)$-approximation~\smash{$\widetilde f_{[m]}$} of $f_{[m]}$. The simple insights behind the algorithm are that:

\begin{observation} \label{obs:knapsack-conv}
For any partition $J = J_1 \sqcup J_2$ it holds that $f_J = f_{J_1} \oplus f_{J_2}$.
\end{observation}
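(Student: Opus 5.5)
We verify the two directions of the identity separately by an exchange argument on solutions; the statement is Observation~\ref{obs:knapsack-conv}, i.e., $f_J(z) = \min_{x+y=z}(f_{J_1}(x)+f_{J_2}(y))$ for all $z \in \Realnneg$, with the convention $\infty + a = \infty$.

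For the inequality $f_J \le f_{J_1} \oplus f_{J_2}$, we fix $z$ and any split $x + y = z$ with $x, y \ge 0$. If $f_{J_1}(x)$ or $f_{J_2}(y)$ is $\infty$ there is nothing to show. Otherwise we take an optimal selection $I_1$ for $f_{J_1}(x)$: one item per bundle of $J_1$, weight at most $x$, cost $f_{J_1}(x)$. Similarly we take $I_2$ for $f_{J_2}(y)$. Since $J_1$ and $J_2$ are disjoint and the bundles partition the items, $I_1 \cup I_2$ picks exactly one item from each bundle of $J$. Its weight is at most $x + y = z$ and its cost is $f_{J_1}(x) + f_{J_2}(y)$. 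Hence $f_J(z) \le f_{J_1}(x)+f_{J_2}(y)$, and we take the minimum over all splits. The minimum is attained because there are only finitely many selections, so each $f_{J_i}$ is a staircase whose minimizer exists.

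For the reverse inequality $f_J \ge f_{J_1} \oplus f_{J_2}$, we again assume $f_J(z) < \infty$ and take an optimal selection $I$ for $f_J(z)$. We split it as $I_i = I \cap \bigcup_{j \in J_i} S_j$ and set $x = \sum_{i \in I_1} w_i$. We then set $y = z - x$. This $y$ is at least $\sum_{i\in I_2} w_i \ge 0$, since the total weight of $I$ is at most $z$. So $I_1$ is feasible for $f_{J_1}(x)$ and $I_2$ is feasible for $f_{J_2}(y)$. Therefore $f_{J_1}(x)+f_{J_2}(y) \le \mathrm{cost}(I_1)+\mathrm{cost}(I_2) = f_J(z)$.

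The only mildly delicate point is handling $\infty$ values and the requirement $y \ge 0$; choosing $y = z - x$ rather than the exact weight of $I_2$ resolves the latter. The argument uses nothing beyond the definitions.
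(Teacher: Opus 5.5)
Your argument is correct and is the routine definitional check the paper has in mind when it states this as an observation without proof: merging optimal selections for $J_1$ and $J_2$ gives $f_J \le f_{J_1} \oplus f_{J_2}$, and splitting an optimal selection for $J$ along the bundles (with $y = z - x$) gives the reverse inequality. Your separate remark about the minimum being attained is not needed, because the second direction already exhibits a split achieving $f_J(z)$; when $f_J(z) = \infty$, the first direction shows every split gives $\infty$.
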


\begin{observation} \label{obs:conv-approx}
Let $\widetilde f$, $\widetilde g$ be $(1 + \epsilon)$-approximations of $f$, $g$, respectively. Then $\widetilde f \oplus \widetilde g$ is a $(1 + \epsilon)$-approximation of $f \oplus g$.
\end{observation}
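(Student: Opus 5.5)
The statement to prove is \cref{obs:conv-approx}: if $\widetilde f$ and $\widetilde g$ are $(1+\epsilon)$-approximations of $f$ and $g$, then $\widetilde f \oplus \widetilde g$ is a $(1+\epsilon)$-approximation of $f \oplus g$. I will prove this pointwise at a fixed $z \in \Realnneg$, with one inequality for each side.

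For the lower bound, take any split $x+y=z$. By assumption $\widetilde f(x) \ge f(x)$ and $\widetilde g(y) \ge g(y)$, so
\[
\widetilde f(x)+\widetilde g(y) \ge f(x)+g(y) \ge (f\oplus g)(z).
\]
Taking the minimum over all splits gives $(\widetilde f\oplus\widetilde g)(z) \ge (f\oplus g)(z)$.

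For the upper bound, take any split $x+y=z$. By assumption $\widetilde f(x) \le (1+\epsilon)f(x)$ and $\widetilde g(y) \le (1+\epsilon)g(y)$, so
\[
(\widetilde f\oplus\widetilde g)(z) \le \widetilde f(x)+\widetilde g(y) \le (1+\epsilon)\bigl(f(x)+g(y)\bigr).
\]
Taking the infimum over all splits on the right gives $(\widetilde f\oplus\widetilde g)(z) \le (1+\epsilon)(f\oplus g)(z)$.

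The only step needing care is the meaning of $\min$ and the value $\infty$. The functions are staircase functions with finitely many steps, so the infimum is attained, or we can simply read $\min$ as $\inf$; the argument is unchanged either way. For $\infty$ we use the conventions $(1+\epsilon)\cdot\infty=\infty$ and $a+\infty=\infty$. If $(f\oplus g)(z)=\infty$, the upper bound is trivial. If it is finite, it is witnessed by finite values $f(x)$ and $g(y)$, and both bounds go through as written. Nonnegativity of all values ensures that multiplying by $1+\epsilon$ preserves the inequalities.
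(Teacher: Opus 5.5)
Your proof is correct. The paper states this observation without proof as immediate, and your pointwise argument is exactly the intended verification: the lower bound uses $\widetilde f \ge f$ and $\widetilde g \ge g$ on every split $x+y=z$, and the upper bound bounds $(\widetilde f \oplus \widetilde g)(z)$ by a single split and then takes the infimum. Your handling of $\infty$ and of reading $\min$ as $\inf$ is fine, and the argument works verbatim for any approximation factor, which is how it is applied with $(1+\delta)^d$ in the proof of \cref{thm:knapsack}.
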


\begin{proof}[Proof of \cref{thm:knapsack}]
Let $\delta > 0$ be a parameter. We design a divide-and-conquer algorithm to recursively compute good approximations $\widetilde f_J$ of the functions $f_J$. At the base level, for any $j \in [m]$, we compute the functions~$f_{\set j}(w) = \min\set{c_i : i \in S_j, w_i \leq w}$ exactly and then pick $\widetilde f_{\set j} := \stair_\delta(f_{\set j})$. For the recursive step we partition $J = J_1 \sqcup J_2$ arbitrarily into halves of (roughly) equal size and compute~\smash{$\widetilde f_{J_1}$} and~\smash{$\widetilde f_{J_2}$} recursively. We then compute their $(\min, +)$-convolution~\smash{$\widetilde f_{J_1} \oplus \widetilde f_{J_2}$} and choose~\smash{$\widetilde f_J := \stair_\delta(\widetilde f_{J_1} \oplus \widetilde f_{J_2})$}. In this way we compute an approximation $\widetilde f_{[m]}$ in a binary-tree fashion with recursion depth $\Order(\log m)$.

By induction it is easy to verify that $\widetilde f_J$ is an $(1 + \delta)^{d+1}$-approximation of $f_J$ when $J$ is at the $d$-th level of the recursion (where $d = 0$ is the base level). This is clear at the base level. At any higher level $d$ we recursively obtain $(1 + \delta)^d$-approximations~\smash{$\widetilde f_{J_1}$} and~\smash{$\widetilde f_{J_2}$}, and thus~\smash{$\widetilde f_{J_1} \oplus \widetilde f_{J_2}$} is a $(1 + \delta)^d$-approximation of $f_{J_1} \oplus f_{J_2}$ by \cref{obs:conv-approx}, which in turn equals $f_J$ by \cref{obs:knapsack-conv}. The transformation by $\stair_\delta(\cdot)$ worsens the approximation by another factor $(1 + \delta)$.

At the root of the recursion the error has accumulated to $(1 + \delta)^{\Order(\log m)}$, and therefore by setting~\smash{$\delta = \Theta(\epsilon / \log m)$} we can compute a $(1 + \epsilon)$-approximation of $f_{[m]}$. By \cref{obs:weight-fn} we can read off an $(1 + \epsilon)$-approximation of the given Minimum-Cost Multiple-Choice Knapsack instance.

Regarding the running time, note that all functions $f_J$ and~\smash{$\widetilde f_J$} have range $\set{0} \cup [1, n^2 / \epsilon] \cup \set{\infty}$ due to the preprocessing with \cref{lem:knapsack-crude}, and thus their $\stair_\delta(\cdot)$ approximations have complexity bounded by $s = \Order(\delta^{-1} \log (n / \epsilon))$. At the $j$-th leaf we spend time~$\Order(|S_j| \log |S_j|)$ to sort the weights in the $j$-th bundle and to compute $f_{\set j}$. Then we spend time $\Order(s)$ to compute~\smash{$\widetilde f_{\set j}$}. Any other node in the recursion tree takes time $\Order(s^2)$ to compute the $(\min, +)$-convolution~\smash{$\widetilde f_{J_1} \oplus \widetilde f_{J_2}$} plus time $\Order(s)$ to compute~\smash{$\widetilde f_J$}. As the recursion tree has $\Order(m)$ nodes, the total running time is
\begin{equation*}
    \Order\parens*{\sum_{j \in [m]} |S_j| \log |S_j| + m s^2} = \Order(n \log n + m \epsilon^{-2} \log^2(m) \log^2(n / \epsilon)) = \Order(n \epsilon^{-2} \log^4(n / \epsilon)),
\end{equation*}
as claimed.
\end{proof}

\bibliographystyle{plainurl}
\bibliography{paper}

\appendix
\def\enc{\mathrm{Enc}}
\section{Sparse Recovery Tools} \label{sec:sparse-recovery}
In this section we provide the missing proofs of \cref{lem:l2-tail-approx,lem:heavy-hitters}. We will use the well-known AMS sketch for estimating the $\ell_2$ norm of a vector:


\begin{lemma}[AMS Sketch~\cite{AlonMS99}] \label{lem:ams-sketch}
Let $n \geq 1$ be an integer and let $\delta > 0$. In time and space~$O(\log (1/\delta))$ we can sample a matrix $A \in \{-1, 1\}^{c \times n}$ with $c = \Order(\log(1/\delta))$ rows and the following properties. (i) Any entry of $A$ can be computed in time $O(1)$. (ii) For any vector $x \in \mathbb{R}^n$, given $y = Ax$, we can compute in $\Order(\log(1/\delta))$ time a value $v$ which satisfies \smash{$\frac12\cdot \norm{x}_2^2 \leq v \leq 2 \cdot \norm{x}_2^2$} with probability at least $1-\delta$.
\end{lemma}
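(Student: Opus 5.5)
This is the classical AMS sketch, and I would prove it with the standard second-moment argument for a single row followed by median-of-means boosting.

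\textbf{Construction.} Each row of $A$ is a vector of 4-wise independent Rademacher signs $\sigma \in \{-1,1\}^n$. Such a vector is generated from a random polynomial of degree $3$ over a field of size $\Theta(n)$, for instance by mapping each evaluation to a sign via its parity or a threshold. This representation has two useful features: the seed takes $O(1)$ words, and any single sign can be evaluated in $O(1)$ time. We use $c = c_1 c_2$ rows with $c_1 = O(1)$ and $c_2 = O(\log(1/\delta))$, which gives sampling time and space $O(\log(1/\delta))$ and property (i).

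\textbf{Estimator for one row.} For one row $\sigma$, let $Z = (\sigma^\top x)^2$.
\begin{itemize}
\item Pairwise independence gives $\mathbb{E}[Z] = \sum_i x_i^2 = \norm{x}_2^2$.
\item Four-wise independence gives $\mathbb{E}[Z^2] = \sum_i x_i^4 + 3\sum_{i \neq j} x_i^2 x_j^2 \le 3\norm{x}_2^4$, and hence $\mathrm{Var}(Z) \le 2\norm{x}_2^4$.
\end{itemize}
Next, average $c_1$ independent copies of $Z$. The mean $\bar Z$ has variance at most $2\norm{x}_2^4/c_1$. By Chebyshev,
\[
\Pr\bigl[|\bar Z - \norm{x}_2^2| > \tfrac12\norm{x}_2^2\bigr] \le 8/c_1 \le 1/4
\]
once we take $c_1 = 32$. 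On the complementary event, $\tfrac12\norm{x}_2^2 \le \bar Z \le \tfrac32\norm{x}_2^2$, which lies within the required window $[\tfrac12\norm{x}_2^2, 2\norm{x}_2^2]$.

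\textbf{Boosting and decoding.} We form $c_2$ independent such averages, using independent seeds, and output their median $v$. The median leaves the window only if at least half of the averages fall outside it. Each average does so with probability at most $1/4$, so a Chernoff bound bounds this probability by $e^{-\Omega(c_2)} \le \delta$ for $c_2 = O(\log(1/\delta))$. Decoding from $y = Ax$ means squaring the entries, averaging them in groups, and taking a median, which takes $O(c) = O(\log(1/\delta))$ time with linear-time selection.

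\textbf{Main obstacle.} The only delicate point is requirement (i): the sign matrix must be storable in $O(\log(1/\delta))$ space with $O(1)$-time entry access. Full independence cannot be afforded, which is why I would use 4-wise independent hash families. The fourth-moment computation shows that 4-wise independence suffices for the variance bound.

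If the sign map is only approximately unbiased (for example, parity over a field of odd size), a bias of $O(1/n)$ appears. To avoid it I would use an exactly uniform construction, such as a polynomial over $\mathbb{F}_{2^r}$ with the sign taken as one output bit. Alternatively, one can absorb the bias into the constants, but then the cross terms need care, since a small bias per entry can still accumulate over many index pairs.
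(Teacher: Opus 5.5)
Your proof is correct and is precisely the classical Alon--Matias--Szegedy argument: 4-wise independent signs, $\mathbb{E}[Z]=\norm{x}_2^2$ and $\mathrm{Var}(Z)\le 2\norm{x}_2^4$, Chebyshev on an average of $32$ copies, then a median of $O(\log(1/\delta))$ such averages; the paper does not reprove this but simply cites it. Of your two fixes for the sign bias, the prime-field parity construction is the safer one: a degree-$3$ polynomial modulo a prime $p \ge n$ gives bias $\mu = 1/p$, and by Cauchy--Schwarz all bias-induced terms in $\mathbb{E}[Z]$ and $\mathbb{E}[Z^2]$ are $O(\mu^2 n) = O(1/n)$ relative to $\norm{x}_2^2$ and $\norm{x}_2^4$, which is absorbed by taking $p \ge Cn$; by contrast, multiplication in $\mathbb{F}_{2^r}$ is not a unit-cost operation on the paper's Word RAM, which would put the $O(1)$ entry-evaluation time of property (i) at risk.
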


The following lemma is basically the same as~\cite[Lemma 1.4]{NakosS19}. Here we present a simplified proof, and we use random signs instead of Gaussian random variables as used in that proof.

\lemltwotailapprox*

\begin{proof}
For $s = 0$ the claim follows from \cref{lem:ams-sketch} after scaling $v$ by a factor $\frac 12$, so assume $s > 0$. We will prove the claim for $m = \Order(1)$ with success probability $\frac23$. The result then follows from a standard boosting argument, by repeating the process $\Theta(\log (1/\delta))$ times and taking the median.

\subparagraph*{Description of the Sketch Matrix.}
We start with a description of the sketch matrix $S = S^{(s)}$.
Sample a set $T \subseteq [n]$ containing each $i \in [n]$ pairwise independently with probability $\frac{1}{16 s}$.\footnote{That is, we sample a pairwise independent hash function $h : [n] \to [16s]$, and let $T = \set{i : h(i) = 0}$, say. It is well-known that such a hash function can be sampled with only $O(\log n)$ bits, i.e., $O(1)$ words, of randomness.} Let~$\Pi_T$ be the projection matrix that zeroes out coordinates outside of~$T$, i.e, $\Pi_T$ is a diagonal matrix whose $i$-th diagonal entry is 1 if $i \in T$ and 0 otherwise. We use the AMS sketch from \cref{lem:ams-sketch} with parameter $\delta = 0.01$ to construct a matrix $A$ and set $S := A \Pi_T$. It is clear that each entry of~$S$ can be computed in time $O(1)$ as each entry of $A$ can be computed in time $O(1)$ by \cref{lem:ams-sketch}.

\subparagraph*{Description of the Query Algorithm.}
Next we describe the algorithm that, given $S x$ for some vector $x \in \Real^n$, returns an approximation $v$ as in the statement. Simply feed $S x = A (\Pi_T x)$ to the algorithm from \cref{lem:ams-sketch} to obtain a value \smash{$\frac12\cdot \norm{\Pi_T x}_2^2 \leq v \leq 2 \cdot \norm{\Pi_T x}_2^2$} (with probability at least $0.99$). Return $\bar v := \frac s2 \cdot v$. Clearly the running time is $O(1)$ by \cref{lem:ams-sketch}.

\subparagraph*{Correctness: Upper Bound.}
We argue that this algorithm is correct with probability at least~\smash{$\frac23$}.
%
%
%
%
Let $\sigma$ be a permutation of $[n]$ satisfying $|x_{\sigma(1)}| \ge \ldots \ge |x_{\sigma(n)}|$. In particular, $x_{\sigma(1)},\ldots,x_{\sigma(s)}$ are the largest $s$ entries of $x$ in absolute value. Let $R := \{\sigma(s+1),\ldots,\sigma(n)\}$ denote the remaining entries. 
Note that for any $i \in R$ the contribution of the $i$-th coordinate to $\norm{\Pi_T x }_2^2$ is $x_i^2$ if $i \in T$, which happens with probability $\frac{1}{16 s}$, and it is 0 otherwise. By linearity of expectation we thus have
\begin{equation*}
    \Ex\big[ \norm{\Pi_{T \cap R} \, x }_2^2 \big] = \frac 1{16 s} \cdot \norm{ \Pi_R \, x }_2^2 = \frac 1{16 s} \cdot \norm{ x_{-s} }_2^2.    
\end{equation*}
From Markov's inequality it follows that $\norm{\Pi_{T \cap R} \cdot x }_2^2 \le \frac 1{s} \cdot \norm{ x_{-s} }_2^2$ with probability at least $1-\frac{1}{16}$.
Moreover, for the $s$ largest entries of $x$, by union bound we have $T \cap \{\sigma(1),\ldots,\sigma(s)\} = \emptyset$ with probability at least $1 - s \cdot \frac 1{16 s} = 1-\frac{1}{16}$. Combining both parts and applying the union bound, we arrive at $\norm{\Pi_T x}_2^2 \le \frac 1{s} \cdot \norm{ x_{-s} }_2^2$ with probability at least $1-\frac{1}{8}$. Recall that the value~$v$ computed by the AMS sketch satisfies \smash{$\frac12\cdot \norm{\Pi_T \cdot x}_2^2 \leq v \leq 2 \cdot \norm{\Pi_T \cdot x}_2^2$} with probability at least $0.99$; we condition on this event in what follows.
Combining this with the previous fact, we obtain the second claimed inequality:
\begin{equation*}
    \bar v = \frac s2 \cdot v \le s \cdot \norm{\Pi_T \cdot x}_2^2 \le \norm{x_{-s}}_2^2.
\end{equation*}

\subparagraph*{Correctness: Lower Bound.}
We continue with the other direction. Let $y \in \Real^n$ be the vector defined by
\begin{equation*}
    y_i = \min\left\{|x_i|^2, \frac{1}{512s} \cdot \norm{x_{-512s}}_2^2\right\}.
\end{equation*}
Moreover, let $\delta_i \in \{0,1\}$ indicate whether $i \in T$. Our first goal is to prove $\norm{y}_1 \geq \norm{x_{-512s}}_2^2$. Focus on an index $i$ with $y_i > |x_i|^2$, or equivalently,
\begin{equation*}
    |x_i|^2 > \frac{1}{512s} \cdot \norm{x_{-512s}}_2^2.
\end{equation*}
Clearly there can be at most $1024s$ such indices: The largest $512s$ coordinates in $x$, plus at most $512s$ more coordinates satisfying this inequality. Let $i^*$ be the largest index with $y_{i^*} < |x_{i^*}|^2$. There are two cases: On the one hand, if $i^* \geq 512s$, then clearly
\begin{equation*}
    |x_{\sigma(1)}|^2 \geq \dots \geq |x_{\sigma(512s)}|^2 > \frac{1}{512s} \cdot \norm{x_{-512s}}_2^2,
\end{equation*}
and thus
\begin{equation*}
    \norm{y}_1 \geq \sum_{i \leq 512s} y_{\sigma(i)} \geq \frac{512s}{512s} \cdot \norm{x_{-512s}}_2^2 = \norm{x_{-512s}}_2^2.
\end{equation*}
On the other hand, if $i^* < 512s$ then
\begin{equation*}
    \norm{y}_1 \geq \sum_{i>512s} y_{\sigma(i)} = \sum_{i>512s} |x_{\sigma(i)}|^2 = \norm{x_{-512s}}_2^2.
\end{equation*}

We now turn our attention to the random variable $\sum_{i=1}^n \delta_i y_i$. Clearly, $|x_i|^2$ majorizes $y_i$ and therefore $\Pr[\sum_{i=1}^n \delta_i y_i \geq \lambda] \leq \Pr[\sum_{i=1}^n \delta_i |x_i|^2 \geq \lambda] = \Pr[\|\Pi_T x\|_2^2 \geq \lambda]$ for any $\lambda$. By linearity of expectation we have
\begin{equation*}
    \Ex\brackets*{\sum_{i=1}^n \delta_i y_i} = \frac{\norm{y}_1}{16 s},
\end{equation*}
and by pairwise independence of the random variables $\delta_i$ we have
\begin{align*}
    \Var\brackets*{\sum_{i=1}^n\delta_iy_i}
    &= \sum_{i=1}^n y_i^2 \Var[\delta_i] \\
    &\leq \sum_{i=1}^n y_i^2 \cdot \frac{1}{16 s}
\intertext{To bound this further we use that $y_i \leq \frac{1}{512s} \cdot \norm{x_{512s}}_2^2$ and that $\norm{x_{-512s}}_2^2 \leq \norm{y}_1$ (as shown before):}
    &\leq \frac{\norm{y}_1}{16 s} \cdot \frac{\norm{x_{-512s}}_2^2}{512s} \\
    &\leq \frac{\norm{y}_1}{16 s} \cdot \frac{\norm{y}_1}{512s} \\
    &= \frac{\norm{y}_1^2}{8192 s}.
\end{align*}
Applying Chebyshev's inequality to the random variable $\sum_{i=1}^n \delta_i y_i$ we obtain that 
\begin{equation*}
    \Pr\brackets*{\abs*{\sum_{i=1}^n y_i \delta_i - \frac{\norm{y}_1}{16 s}} > \frac{\norm{y}_1}{32 s}} \leq \frac{\Var\brackets*{\sum_{i=1}^n\delta_i y_i}}{\parens*{\frac{\norm{y}_1}{32 s}}^2}\leq \frac{1}{8}.
\end{equation*}
This implies that with probability at least $1-\frac{1}{8}$ we have that
\begin{equation*}
    \norm{\Pi_T x}_2^2 \geq \frac{\norm{y}_1}{32 s} \geq \frac{\norm{x_{-512s}}_2^2}{32s}.
\end{equation*}
Combining with the guarantee of the AMS sketch we obtain the desired bound
\begin{equation*}
    \bar v \geq \frac{\norm{\Pi_T x}_2^2}{2} \geq \frac{\norm{x_{-512s}}_2^2}{64s}.
\end{equation*}
The total failure probability is $\frac{1}{8}$ (from the upper bound) plus $\frac{1}{8}$ (from the lower bound) plus $0.01$ (from the AMS sketch) which is $0.26 < \frac{1}{3}$ as claimed.
\end{proof}

Next, we focus on the proof of \cref{lem:heavy-hitters}. We will use the ubiquitous Count-Sketch for estimation of coordinates.

\begin{lemma}[Count-Sketch~\cite{CharikarCF02}]
\label{lem:countsketch}
Let $n \geq 1$ and $n \geq s \geq 0$ be integers and let $\delta > 0$. In time and space $O(\log(1/\delta))$ we can sample a matrix $C^{(s)} \in \set{-1, 0, 1}^{m \times n}$ with $m = O(s \log(1/\delta))$ rows and the following properties. (i) The column sparsity of $C^{(s)}$ is $O(\log(1/\delta))$, and any column of $C^{(s)}$ can be computed in time $O(\log(1/\delta))$. (ii) For any $x \in \mathbb{R}^n$, given $C^{(s)} x$ and $i \in [n]$ we can find in time $O(\log(1/\delta))$ an estimate $x_i'$ which satisfies with probability at least $1 - \delta$ that
\begin{equation*}
    |x_i - x_i'|\leq \frac{1}{\sqrt{s}}\cdot \norm{x_{-s}}_2.
\end{equation*}
\end{lemma}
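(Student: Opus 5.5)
This is the standard Count-Sketch of Charikar, Colton and Farach-Colton, so I will recall its construction and analysis. The plan is to first build a single-row-group version with constant success probability, and then amplify by taking medians.

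\emph{Construction.} Set $B = c\, s$ buckets for a suitable constant $c$ (say $c = 6$, or $B=1$ when $s=0$). Sample a pairwise independent hash function $h : [n] \to [B]$ and a pairwise independent sign function $\sigma : [n] \to \{-1,1\}$. Each needs only $O(1)$ words of randomness and can be evaluated in $O(1)$ time. The corresponding $B \times n$ matrix has entry $\sigma(i)$ at position $(h(i), i)$ and zeros elsewhere. Its column sparsity is $1$, and every column can be computed in $O(1)$ time.

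Stack $r = \Theta(\log(1/\delta))$ independent copies. The resulting $C^{(s)}$ then has $m = O(s \log(1/\delta))$ rows, column sparsity $O(\log(1/\delta))$, and can be sampled in time and space $O(\log(1/\delta))$. This gives property (i).

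\emph{Single-row estimate.} Given $y = C^{(s)} x$ and an index $i$, copy $\ell$ produces the estimate $\sigma_\ell(i)\, y^{(\ell)}_{h_\ell(i)}$. This equals $x_i$ plus an error term $\sum_{i' \neq i,\, h(i') = h(i)} \sigma(i)\sigma(i') x_{i'}$. Let $H$ be the set of the $s$ largest coordinates of $x$; I split the error according to whether $i'$ lies in $H$.
\begin{itemize}
\item \emph{Heavy part.} Some $i' \in H \setminus \{i\}$ collides with $i$ with probability at most $s/B = 1/c$, by pairwise independence and a union bound.
\item \emph{Light part.} Conditioned on $h$, the pairwise independent signs make the light error mean zero. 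Its second moment is at most $\sum_{i' \notin H} \Pr[h(i') = h(i)]\, x_{i'}^2 = \|x_{-s}\|_2^2 / B$. By Chebyshev, the light error exceeds $\|x_{-s}\|_2/\sqrt{s}$ with probability at most $s/B = 1/c$.
\end{itemize}
So each copy is accurate to within $\frac{1}{\sqrt{s}}\|x_{-s}\|_2$ with probability at least $1 - 2/c \geq 2/3$.

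\emph{Amplification.} Output the median of the $r$ single-row estimates. The median is accurate unless at least half of the independent copies fail. By a Chernoff bound this happens with probability at most $\delta$ once $r = \Theta(\log(1/\delta))$ has a large enough constant. Computing the $r$ estimates and taking their median (by linear-time selection) costs $O(\log(1/\delta))$, which gives property (ii).

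The only delicate point is the light-part variance bound. There I have to use pairwise independence of the signs to cancel the cross terms, and pairwise independence of the hash to bound each collision probability by $1/B$. The heavy-collision event has to be excluded before applying Chebyshev, rather than folded into the variance. Everything else is routine.
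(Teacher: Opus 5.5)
Your proposal is correct. It is the standard Count-Sketch analysis: pairwise independent bucket hashes and signs, a union bound excluding collisions with the top-$s$ coordinates, Chebyshev on the remaining tail, and median amplification over $\Theta(\log(1/\delta))$ independent copies. The paper gives no proof of its own and simply cites Charikar, Chen and Farach-Colton, whose argument is the one you reproduce. The only nitpick is the degenerate case $s=0$: there the guarantee is vacuous, and your choice $B=1$ slightly exceeds the stated $m = O(s\log(1/\delta))$. This is immaterial, since the paper only invokes $C^{(s+1)}$ with $s \geq 1$.
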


We will use an efficiently encodable and decodable error-correcting code that corrects a constant fraction of errors, e.g., the following construction due to Spielman~\cite{Spielman96}.

\begin{lemma}[Superconcentrator Error-Correcting Code~\cite{Spielman96}]
\label{lem:spielman}
There exist absolute constants $\alpha < 1 < \beta$ and a function $\enc: \{0,1\}^t \rightarrow \{0,1\}^{\beta t}$ such that $\enc(x)$ can be computed in $O(t)$ time and an $\alpha$ fraction of errors can be corrected in $O(t)$ time, i.e., given $y$ which differs from $\enc(x)$ in at most an $\alpha$ fraction of its coordinates we can recover $x$ in $O(t)$ time.
\end{lemma}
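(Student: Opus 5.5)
The lemma is an existence statement for linear-time encodable and decodable asymptotically good codes. The plan is to take it directly from Spielman~\cite{Spielman96}, and to reconstruct his recursive construction only far enough that the constants $\alpha<1<\beta$ and the $O(t)$ bounds are visible. All ingredients are constant-size objects (a fixed-degree expander family and a fixed good base code), so the constants are absolute. Odd lengths $t$ are handled by padding to the next admissible length, which changes $t$ by at most a constant factor.

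\textbf{Step 1: error-reduction codes.} The first step builds \emph{error-reduction codes} from a family of constant-degree unbalanced bipartite expanders with $t$ left vertices (message bits) and $t/2$ right vertices (check bits). Each check bit is the parity of its neighbours, so encoding takes $O(t)$ time. The decoder is the Sipser--Spielman style parallel flip algorithm: repeatedly flip every message bit that has a majority of unsatisfied checks. The expansion property (sets of size up to $\gamma t$ expand by a factor greater than $3/4$ of the degree) implies the following. If at most $\alpha' t$ message bits and at most $\alpha' t$ check bits are corrupted, a constant number of rounds reduces the number of message errors by a constant factor. Since each round costs $O(t)$, the work is geometric in the number of rounds.

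\textbf{Step 2: recursive composition.} Define $\mathcal{C}_t$ from three pieces: the message $x$ of length $t$, the checks of an error-reduction code of $x$ of length $t/2$, and a recursively defined code $\mathcal{C}_{t/4}$ applied to those checks. A further error-reduction layer is then put on top of all of this. The base case is a fixed good code of constant length, decoded by brute force in $O(1)$ time. Encoding satisfies the recurrence $E(t)=O(t)+E(t/4)$, hence $E(t)=O(t)$. The block length satisfies $N(t)=O(t)+N(t/4)$, hence $N(t)=\beta t$.

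For decoding, the outer error-reduction layer first brings the errors in the inner part down to a fraction that the recursive decoder of $\mathcal{C}_{t/4}$ can handle. Recursion then corrects the check bits completely. With correct checks, the flip algorithm from Step 1 drives the message errors to zero. This gives $D(t)=O(t)+D(t/4)=O(t)$.

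\textbf{Main obstacle.} The delicate part is choosing the error fraction $\alpha$ consistently across all levels of the recursion. A constant fraction of errors in the whole codeword may be concentrated in a single sub-block, so the fraction seen by that sub-block can be larger by the ratio of lengths. The expander parameters and the contraction factor of each error-reduction layer have to absorb this blow-up uniformly, so that the invariant ``at most $\alpha$ fraction of errors implies correct decoding'' holds at every level.

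I would first verify this invariant with explicit constants from an expander with expansion $(1-\epsilon)$ times the degree; random graphs suffice for existence, but explicit ones are needed for the running time. Failing that, I would simply invoke \cite{Spielman96}, since only existence with absolute constants and linear time is needed here.
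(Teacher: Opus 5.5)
Your proposal matches the paper, which gives no proof of this lemma and simply imports it from \cite{Spielman96}; your outline of expander-based error-reduction layers, recursion on the check bits, and a final flip phase once the checks are correct is a faithful sketch of that source. If you do write the sketch out, fix two slips: with $\Theta(t)$ work per flip round, driving the errors to zero takes $\Theta(\log t)$ rounds and hence $O(t \log t)$ time, so linear time needs the per-round work to scale with the number of currently changing bits (or the sequential rule, where each flip removes an unsatisfied check); and the recursive code applied to $t/2$ check bits should be $\mathcal{C}_{t/2}$, not $\mathcal{C}_{t/4}$.
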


\lemheavyhitters*

\begin{proof}
For $s = 0$ the claim is trivial (by choosing $x'$ as the all-zeros vector), so assume that~\makebox{$s > 0$}. The main effort is to recover a small set $L$ that contains all \emph{heavy hitters}, i.e., all coordinates $i$ with \smash{$|x_i| \geq \frac{1}{\sqrt{s+1}} \norm{x_{-s}}_2$}. We will focus on this subtask for the main part of this proof, and then comment in the final paragraph how to derive the vector $x'$ from $L$.

\subparagraph*{Description of the Sketch Matrix.}
Without loss of generality assume that $n$ is a power of~2, and identify $[n]$ with $\set{0, 1}^{\log n}$. In what follows $\enc : \set{0, 1}^{\log n} \to \set{0, 1}^{\beta \cdot \log n}$ is the error correcting code from \cref{lem:spielman} with the corresponding constants $\alpha$ and $\beta$. Let $\gamma \ge 1$ be a large constant to be determined and let $R = \gamma \log(s/ \delta)$. Moreover, let $A_1, \dots, A_R \in \set{-1, 1}^{c \times n}$ denote independent AMS sketch matrices with error parameter $0.01$ and $c = O(1)$ rows (\cref{lem:ams-sketch}). We also pick pairwise independent hash functions
\begin{alignat*}{2}
    h_r &: [n] \to [\gamma s]\qquad &&\text{(for $r \in [R]$)}, \\
    \tau_{r, \ell} &: [n] \to \set{-1, +1}\qquad &&\text{(for $r \in [R],\, \ell \in [\beta \log n]$)},
\intertext{Now consider the values}
    v_{r, b, j} &= \sum_{\substack{i \in [n]\\h_r(i) = b}} A_r[j, i] \cdot x_i \qquad&&\text{(for $r \in [R],\, b \in [\gamma s],\, j \in [c]$)}, \\
    w_{r, b, \ell} &= \sum_{\substack{i \in [n]\\h_r(i) = b\\\enc(i)[\ell] = 1}} \tau_{r, \ell}(i) \cdot x_i \qquad&&\text{(for $r \in [R],\, b \in [\gamma s],\, \ell \in [\beta \log n]$)}.
    \end{alignat*}
Note that these quantities are linear combinations of the $x_i$'s and can thus be expressed as a matrix-vector product $H^{(s)} x$, where $H^{(s)}$ has  column sparsity $(\beta \log n +\gamma)\cdot R = O(\log n \log (s/\delta))$. Moreover, given any column $i \in [n]$, we can easily report the nonzero entries in the $i$-th column in time $O(\log n \log(s/\delta))$.

\subparagraph*{Description of the Query Algorithm.}
We are given~$H^{(s)} x$ for some (unknown) vector $x$. Our goal is to compute a set $L$ that (1) contains every heavy hitter, and (2) has size at most $|L| = O(s)$.

Fix $r, b$, and let $T := \set{ i : h_r(i) = b}$. Note that from the values~\smash{$(v_{r, b, j})_{j \in [c]}$} we can learn $\norm{x_T}_2^2$ with relative error $\frac12$ with probability at least $0.99$ by the AMS sketch (\cref{lem:ams-sketch}); denote this approximation by $d_{r, b}$. We define a bit-string~\smash{$Z \in \set{0,1}^{\beta \log n}$} by~\smash{$Z[\ell] = 1$} if $|w_{r,b,\ell}| \geq \frac{1}{2} \cdot \sqrt{d_{r, b}}$, and~\smash{$Z[\ell] = 0$} otherwise---this can be implemented by iterating over all $\ell \in [\beta \log n]$ and spending constant time per $\ell$. We subsequently perform error correction on $Z$ using the linear-time decoder associated to $\enc$ to infer an index $i \in \{0,1\}^{\log n}$. Across all pairs $(b, r)$ this yields a total of $\gamma s \cdot R$ indices. We let $L$ be the of all those indices that appear at least $\frac23 \cdot R$ times. 

\subparagraph*{Correctness.}
Clearly, there can be at most $O(s)$ such indices, proving (2). To prove (1), we claim that with probability at least $1 - \delta$, all the heavy hitters are in $L$. Recall that here by heavy hitter we mean a coordinate $i$ with
\begin{equation*}
    |x_i| \geq \frac{1}{\sqrt{s+1}} \cdot \norm{x_{-s}}_2^2.
\end{equation*}
Note that there can be at most~$2s+1$ heavy hitters---the $s$ largest elements in $x$ plus at most $s+1$ more coordinates. Fix a particular heavy hitter $i^\star$, a particular $r^\star \in [R]$ and let $b^\star := h_{r^\star}(i^\star)$. With probability~\smash{$1 - \frac{2s}{\gamma s} = 1 - \frac{2}{\gamma}$} the coordinate $i^\star$ will be isolated from the other heavy hitters under~$h_{r^\star}$. Condition on this event, and consider the expected $\ell_2^2$-mass of the non-heavy hitters in the bucket~$b^\star$:
\begin{equation*}
    \Ex\brackets*{\sum_{\substack{i \in [n], i \neq i^\star\\h_{r^\star}(i)=b^\star}} |x_i|^2} = \frac{\norm{x_{-s}}}{\gamma s}.
\end{equation*}
By Markov's inequality it follows that with probability at least $1 - \frac{200}{\alpha \gamma}$ we have that
\begin{equation*}
    \sum_{\substack{i \in [n], i \neq i^\star\\h_{r^\star}(i)=b^\star}} |x_i|^2 \leq \frac{\alpha}{200s} \norm{x_{-s}}_2^2 \leq \frac{\alpha}{100} \cdot |x_{i^\star}|^2.
\end{equation*}
Conditioning on this event, \cref{lem:ams-sketch} guarantees that the approximation $d_{r^\star, b^\star}$ satisfies that 
\begin{equation*}
    \frac12 \cdot |x_{i^\star}|^2 \leq d_{r^\star, b^\star} \leq 2 \cdot \parens*{|x_{i^\star}|^2 + \frac{\alpha}{100} \cdot |x_{i^\star}|^2} < 2.1 \cdot |x_{i^\star}|^2,
\end{equation*}
with probability at least $0.99$. We now show that in this event, for each fixed $\ell \in [\beta \log n]$, the expected total contribution from the non-heavy hitters to $w_{r^\star,b^\star,\ell}$ is small. Specifically, consider the following random variable, which depends only on the randomness of $\tau$,
\begin{equation*}
    X_{r^\star, b^\star, \ell} := w_{r^\star, b^\star, \ell} - \tau_{r^\star, \ell}(i^\star) \cdot \enc(i^\star)[\ell] \cdot x_{i^\star}.
\end{equation*}
Its expectation is zero (as each term in the sum is multiplied by a random sign $\tau_{r^\star, \ell}(i)$), and its variance can be bounded as follows:
\begin{align*}
    &\Var \brackets*{\,X_{r^\star, b^\star, \ell}\,} = \Var \brackets*{\,w_{r^\star, b^\star, \ell} - \tau_{r^\star, \ell}(i^\star) \cdot \enc(i^\star)[\ell] \cdot x_{i^\star}\,} \\
    &\qquad\leq \Ex \brackets*{\,\parens*{\sum_{\substack{i \in [n], i \neq i^\star\\h_{r^\star}(i) = b^\star\\\enc(i)[\ell] = 1}} \tau_{r^\star, \ell}(i) \cdot x_i}^2\,} \\
    &\qquad\leq \sum_{\substack{i \in [n], i \neq i^\star\\h_{r^\star}(i) = b^\star \\\enc(i)[\ell] = 1}} x_i^2\\
    &\qquad\leq \frac{\alpha}{100} \cdot x_{i^\star}^2,
\end{align*}
Thus, by Chebyshev's inequality it follows that with probability at least $1-\frac{\alpha}{25}$ we have that
\begin{equation*}
    \abs*{w_{r^\star, b^\star, \ell} - \enc(i^\star)[\ell] \cdot x_{i^\star}} \leq 2 \sqrt{\Var[\, X_{r^\star, b^\star, \ell}\,]} \leq \frac14 \cdot |x_{i^\star}|.
\end{equation*}
Now if $\enc(i^\star)[\ell] = 1$ then
\begin{align*}
    |w_{r^\star, b^\star, \ell}| \geq \frac{3}{4} \cdot |x_{i^\star}| \geq \frac{3}{4} \cdot \sqrt{\frac{1}{2.1} \cdot d_{r^\star, b^\star}} > \frac{1}{2} \cdot \sqrt{d_{r^\star, b^\star}}.
\end{align*}
On the other hand, if $\enc(i^\star)[\ell] = 0$ then
\begin{equation*}
    |w_{r^\star, b^\star, \ell}| \leq \frac{1}{4} \cdot |x_{i^\star}| \leq \frac14 \cdot \sqrt{2 d_{r^\star, b^\star}} < \frac12 \cdot \sqrt{d_{r^\star, b^\star}}.
\end{equation*}
Since for each $\ell$ we assign $Z[\ell] = 1$ if and only if $|w_{r^\star, b^\star, \ell}| \geq \frac12 \cdot \sqrt{d_{r^\star, b^\star}}$, it follows that $Z[\ell] = \enc(i^\star)[\ell]$ with probability at least $1 - \frac{\alpha}{25}$. By the Markov's inequality, $Z$ will differ from $\enc(i^\star)$ in less than an $\alpha$-fraction of the coordinates with probability $\frac{24}{25}$. This suffices to ensure that the recovered index from the decoding procedure associated with $C$ equals $i^\star$. All in all, the probability of not recovering $i^\star$ is at most $\frac{2}{\gamma} + \frac{200}{\alpha \cdot \gamma} + 0.01 + \frac{1}{25} < \frac{1}{4}$, for a suitable choice of $\gamma$. The overall $R = \Order(\log(s/\delta))$ repetitions ensure that the procedure fails with probability at most $\delta$ by a union bound over all $O(s)$ heavy hitters.

\subparagraph*{Recovering the Approximation \boldmath$x'$.}
So far we have only given a sketch to recover the set $L$, and it remains to argue how to obtain $x'$. In addition, we set up a Count-Sketch matrix $C^{(s+1)}$ (\cref{lem:countsketch}) with failure probability $\Theta(\delta/s)$, and append this matrix to our sketch matrix $H^{(s)}$ from before. The additional column sparsity is $O(\log(s/\delta))$, the number of rows increases by $O(s \log (s/\delta))$, and we can still construct all nonzero entries in a column in linear time.

Then, in the recovery algorithm we simply query Count-Sketch to obtain approximations $x_i'$, for all $i \in L$, satisfying that
\begin{equation*}
    |x_i - x_i'| \leq \frac{1}{\sqrt{s+1}} \cdot \norm{x_{-s}}_2.
\end{equation*}
Let $x'$ be the vector consisting of these approximations whenever $i \in L$, and zeros elsewhere. By a union bound over the $|L| = O(s)$ list entries, all entries in $L$ is correctly approximated with probability at least $1 - \delta$. Moreover, as argued before, with probability at least $1 - \delta$ all missing entries in $x'$ are not heavy hitters and can therefore be correctly approximated by $0$. The total error probability is $1 - 2\delta$ which yields the claim (by adjusting $\delta$).
\end{proof}
\section{Distributed Heavy Hitters} \label{sec:distributed}
In this section we present another simple application of our new budget allocation sketch (\cref{thm:budget-alloc}) to a distributed/streaming heavy hitters problem.

In the usual turnstile streaming setting, the goal is to maintain a vector $x \in \mathbb{R}^n$, which is initially all-zeroes, under a stream of updates of the form $x_i := x_i + \Delta$, for given $i,\Delta$. The goal is to answer interesting queries about $x$ (approximately and with high probability), while minimizing the space usage. In the multi-pass setting, multiple passes over the stream of updates are allowed, and we want to make as few passes as possible.

We consider the following distributed variant of the usual turnstile multi-pass streaming setting. We have $d$ players and an arbitrary partitioning of the coordinates $[n] = I_1 \cup \ldots \cup I_d$. The stream is split over the players, namely player~$j$ receives exactly the updates affecting the coordinates in~$I_j$. 
In addition, there is a coordinator, which can communicate with the players. 
Again the goal is to answer interesting queries about $x$, while minimizing the number of passes over the stream, the total space used by the players, the number of rounds of communication, and the total amount of communication (as well as the running time used by the players and coordinator).

We study this setting for the heavy hitters problem: The task is that the coordinator knows the heavy hitters of $x$, more precisely the coordinator computes a set $R$ of size $|R| = O(1/\epsilon^2)$ that with high probability contains each index $i$ with $|x_i| \ge \epsilon \cdot \norm{x_{-1/\epsilon^2}}_2$.

We show that this distributed streaming heavy hitters problem can be solved with 2 passes over the streams, $O((\epsilon^{-2} + d) \log^2 n)$ space in total (over all players and the coordinator, in words), 4 communication rounds, $O((\epsilon^{-2} + d) \log^2 n)$ total communication (in words), and total time $O(d \log^4 n + (\epsilon^{-2} + u) \log^2 n)$, where $u$ is the total length of the input streams.

\subparagraph*{The Algorithm}
For notational convenience, we encode the partitioning $I_1 \cup \ldots \cup I_d$ by the matrix $X \in \mathbb{R}^{d \times n}$ with $X[i, j] = x_i$ if $i \in I_j$, and 0 otherwise. Then the $j$-th column $X e_j$ is equal to the vector given by the stream of player $j$.
Observe that computing the heavy hitters of $x$ is equivalent to computing the heavy hitters of $X$. We run the following protocol.

\begin{enumerate}
\item The coordinator samples the matrix $S$ from \cref{thm:budget-alloc} and sends it to all players. (If all players have access to shared randomness, then this communication round is not necessary.)
\item Now the players make one pass over their streams, where player $j$ computes $S (X e_j)$. They send the results to the coordinator, who now knows $SX$.
\item The coordinator now uses \cref{thm:budget-alloc} with $k := \lceil 1/\epsilon^2 \rceil$ to compute budgets $t_1,\ldots,t_d$. She informs each player of their respective assigned budgets $t_j$.
\item Now the players make a second pass over their streams, where player $j$ computes the $t_j$-heavy hitters (i.e., identifies the entries in the stream larger than $1/\sqrt{t_j+1}$ times the $\ell_2$-mass of all but the $t_j$ heaviest elements), for instance by \cref{lem:heavy-hitters}. They send the heavy entries to the coordinator, who reports the union of the all the received entries.
\end{enumerate}

\subparagraph*{Analysis of Communication, Space, and Time}
The algorithm uses 4 communication rounds. The total communication (in words) is $O(d \log^2 n)$ from sending $S$ and the vectors $S (X e_j)$, plus $O(\epsilon^{-2} \log^2 n)$ from sending the heavy hitters; their total number is indeed
\begin{equation*}
    O\parens*{\sum_j t_j \log^2 n} = O(k \log^2 n) = O(\epsilon^{-2} \log^2 n).
\end{equation*}
The total space is $O(d \log^2 n)$ from each player $j$ storing $S$, the vector $S (X e_j)$, and the matrix $M^{(\ell_j)}$, plus $O(\epsilon^{-2} \log^2 n)$ from storing the vectors $M^{(\ell_j)} (X e_j)$. 
The total time is $O(d \log^4 n + \epsilon^{-2} \log^2 n)$ from \cref{thm:budget-alloc}, plus $O(u \log^2 n)$ from the passes over the streams of total length $u$. Note that each update in the stream involves exactly one column of $S$ or the heavy hitter sketch, and we can compute the non-zero entries of each such column in time $O(\log^2 n)$, therefore each update can be performed in time $O(\log^2 n)$. The remaining steps are dominated by this running time.

\subparagraph*{Correctness}
Since player $j$ reports $O(t_j)$ heavy hitters, the number of heavy hitters ultimately reported by the coordinator is $\sum_j O(t_j) \leq O(k) = O(\epsilon^{-2})$. It remains to prove that with high probability we report all heavy hitters of $X$. Suppose that the $i$-th entry in the stream of player~$j$ satisfies $|X[i, j]| \geq \epsilon \norm{X_{-k}}_F$. In this case the entry also exceeds $\norm{X_{-k}}_F^2 / (k+1)$, and thus, by \cref{thm:budget-alloc}, also $\norm{(X e_j)_{-t_j}}_2^2 / (t_j + 1)$. In other words, the $i$-th item in the stream is a heavy hitter that is reported with high probability in step 4.

\end{document}